\newtheorem{theorem}{Theorem}
\newtheorem{lemma}{Lemma}
\newtheorem{definition}{Definition}
\newcommand{\Alg}{\textsc{RobustTalk}}
\newcommand{\LN}{\textsc{Iteration-1}}
\newcommand{\MLN}{\textsc{Iteration-2}}
\newcommand{\AT}{\textsc{Iteration-3}}
\newcommand{\MAT}{\textsc{Iteration-4}}
\newcommand{\AlgN}{\textsc{RobustBroadcast}}
\newif\ifcomments
\newcommand{\Total}{T}
\def\myval{0pt}
\def\pval{0pt}
\def\nval{0pt}
\def\bigval{0pt}
\newcommand{\defn}[1]{\textbf{\emph{#1}}}
\begin{document}

\title{A Resource-Competitive Jamming Defense\footnote{This research was supported in part by an NSERC Discovery Grant, and NSF grants CCF-1217338, CNS-1318294, CCF-1514383, CCF-1637546, NSF CAREER Award 0644058, CCR-0313160, and CCF-1613772.}}
\date{}

\author[1]{Valerie King}
\author[2]{Seth Pettie}
\author[3]{Jared Saia}
\author[4]{Maxwell Young}
\affil[1]{\small Dept. of Computer Science, University of Victoria, BC, Canada \hspace{1cm}\texttt{val@cs.uvic.ca}}
\affil[2]{\small Dept. of Electrical Engineering and Computer Science,University of Michigan, MI, USA \texttt{pettie@umich.edu}}
\affil[3]{\small Dept. of Computer Science, University of New Mexico, NM, USA \mbox{\texttt{saia@cs.unm.edu}}}
\affil[4]{\small Computer Science and Engineering Dept., Mississippi State University, MS, USA Email: \texttt{myoung@cse.msstate.edu}}

\maketitle
\vspace{-1cm}
\begin{abstract}
Consider a scenario where  Alice wishes to send a message $m$ to Bob in a time-slotted wireless network. However, there exists an adversary, Carol, who aims to prevent the transmission of $m$ by jamming the communication channel. There is a per-slot cost of $1$  to send, receive or jam $m$ on the channel, and we are interested in how much Alice and Bob need to spend relative to Carol in order to guarantee communication.

Our approach is to design an algorithm in the framework of \defn{resource-competitive analysis} where the cost to correct network devices (i.e., Alice and Bob) is parameterized by the cost to faulty devices (i.e., Carol). We present an algorithm that guarantees the successful transmission of $m$ and has the following property: if Carol incurs a cost of $\Total$ to jam, then both Alice and Bob have a cost of $O(\Total^{\varphi - 1} + 1)=O(\Total^{.62}+1)$ in expectation, where $\varphi = (1+ \sqrt{5})/2$ is the golden ratio.  In other words,  it possible for Alice and Bob to communicate while incurring asymptotically less cost than Carol. We generalize to the case where Alice wishes to send $m$  to $n$ receivers, and we achieve a similar result.

Our findings hold  even if (1) $\Total$ is {\it unknown} to  either party; (2) Carol knows the algorithms of both parties, but not their random bits;  (3) Carol can jam using knowledge of past actions of both parties; and (4) Carol can jam reactively,  so long as there is sufficient network traffic in addition to $m$. 
\end{abstract}

\clearpage

\section{Introduction}\label{section:intro}

The wireless communication medium is vulnerable to \defn{jamming} whereby an attacker causes interference on the communication channel.   Such attacks are prohibited by US federal law, however, they occur in practice (see~\cite{jamming:marriott,jamming:redline}) and will  likely increase in frequency with the popularity of wireless devices.

Defending against jamming is particularly important to the security of  wireless low-power networks (LPNs) such as wireless sensor networks (WSNs) and the Internet of Things (IoT), where these attacks are known to degrade performance~\cite{6736871,xu:jamming}. Such networks are especially vulnerable given that  their constituent devices are often battery-powered and, therefore,  energy is a scarce resource that must be conserved. 

With this setting in mind, we consider the following problem: Alice wishes to guarantee transmission of a message $m$ to Bob over a wireless communication channel. However, there exists an adversary, Carol, who aims to prevent communication by jamming transmissions over the channel. Our solution to this problem makes use of a relatively new technique called \defn{resource competitiveness}. Informally, the idea behind resource-competitive analysis is to include the cost of the adversary, $T$, as another parameter by which to measure the algorithmic performance (details are provided in Section~\ref{subsection:definition}).   

We assume a time-slotted model where it costs $1$ to send, listen, or jam per slot on the communication channel; these operations typically dominate the operational costs of wireless devices in terms of energy usage (see Section~\ref{section:discussion} for discussion of this issue). Our results guarantee the successful transmission of $m$ and has the following costs: if Carol incurs a cost of $\Total$ to jam, then both Alice and Bob have a cost of $O(\Total^{\varphi - 1} + 1)=O(\Total^{.62}+1)$ slots in expectation, where $\varphi = (1+ \sqrt{5})/2$ is the golden ratio. We are able to generalize to the setting where Alice wishes to send $m$ to $n$ receivers, and we obtain a similar result. 

In both cases,  the smaller asymptotic costs to the good devices implies that it is possible to ``bankrupt'' a jamming adversary. That is, in attempting to prevent communication through jamming, Carol will deplete her onboard power supply far more rapidly than either Alice or Bob (or multiple receivers) and, when this occurs, communication will succeed. Note that such a result is useful even in the face of continuous jamming; this is in contrast to most prior work where typically there are constraints on how the adversary is permitted to jam (Section~\ref{sec:related_work} discusses previous work on jamming mitigation).

Our results holds even if (1) $\Total$ is {\it unknown}; (2) Carol knows the algorithms used by the communicating parties, but not their random bits; and (3) Carol can  attack using total knowledge of past actions of the communicating parties (i.e. Carol is an \defn{adaptive} adversary).  Under the assumption of sufficient network traffic (in addition to $m$), we also demonstrate that the result holds when Carol knows the channel is in use and then decides to jam (i.e. Carol is a \defn{reactive} adversary).


\subsection{Resource Competitiveness} \label{subsection:definition}

Resource competitiveness is a new approach to designing robust algorithms for distributed systems~\cite{gilbert:resource,bender:resource}. We now define what it means for a distributed algorithm $\mathcal{A}$ to be {\it resource competitive}.

Assume a system with a set $G$ of $n$ good nodes that obey actions prescribed by $\mathcal{A}$. There exists an adversary who incarnates a source of disruption in the system. For example, the adversary may represent (1) any number of malicious nodes that collude and deviate arbitrarily from $\mathcal{A}$, or (2) the effects of more benign failures due to software or hardware faults.

Let $Cost(\alpha, v)$ denote the resource expenditure (or cost) to a good node $v$ over an execution $\alpha$. A resource might be bandwidth, CPU cycles, energy, actual money, or another useful domain-specific measure.  $Cost(\alpha, v)$ is the cost incurred by $v$ for executing the actions prescribed by $\mathcal{A}$ in an execution $\alpha$. 

Let $T(\alpha)$ be the adversary's total cost; this is typically unknown to the good nodes. We simply refer to $T$ and $Cost(v)$  as $\alpha$ is implicit. We  define what it means for $\mathcal{A}$ to be resource competitive:\smallskip 

\begin{definition}\label{def:single} Algorithm $\mathcal{A}$ is ($\rho$, $\tau$)-resource competitive if $\max_{\mbox{\footnotesize $v\in G$}}$ $\{Cost(v)\} \leq   \rho(T) + \tau$ for $\tau>0$ and any execution.\smallskip  \end{definition} 

\noindent In other words,  $\mathcal{A}$ is resource competitive if the {\it maximum} cost incurred by any good node is less than some function of the adversary's total cost, $\rho(T)$, plus some additive term $\tau$. The function $\rho$ is called the {\it robustness function} and it is a function of $T$ and possibly other parameters such as $n$. In designing $\mathcal{A}$, we desire $\rho$  to be a slow-growing function. This implies that, as the amount of disruption increases, the cost incurred by $\mathcal{A}$ increases slowly. Indeed, in dealing with jamming attacks, we show that it is possible to achieve $\rho(T) = o(T)$.

Why do we require $\tau$?  Note that when $T=0$ --- that is, there is no disruption --- the good nodes clearly cannot incur less than zero cost; $\tau$ represents the unavoidable cost to attain a goal even in the absence of disruption.  Efficiency when the level of disruption is low, or non-existent, is critical to efficiency.  It useful to make this separation explicit via  $\tau$, which we call the {\it efficiency function}, and it can be a function of parameters such as $n$, but it is {\it not} a function of $T$.

Functions other than the maximum cost over all nodes may be appropriate depending on the context. For instance, we may consider the average or median cost.  However, Definition~\ref{def:single} has been applied in the existing literature. Finally, we typically report results using big-O notation of the form $O(\vspace{1pt}\rho(T) + \tau)$.


\subsection{Resource Competitiveness in Context}\label{section:resource_competitiveness_intro}

\noindent It is helpful to contrast the notion of resource competitive algorithms with a number of related concepts. \smallskip

\noindent{\bf Notions of Competitiveness.} Competitive analysis is a well-known technique that  evaluates the worst-case performance of an online algorithm relative to an optimal offline algorithm $\mathcal{OPT}$~\cite{sleator:amortized}. While the inputs to an online algorithm can be viewed as adversarially selected, there is no notion of cost to the adversary for selecting certain inputs over others. In contrast, resource competitiveness places the cost to the adversary directly in the performance metric (see Definition~\ref{def:single}). Unlike online analysis where it is impossible for an online algorithm to outperform $\mathcal{OPT}$, a resource-competitive algorithm can actually outperform the adversary by achieving $\rho(T) = o(T)$.  

Game theory provides another measure of competitiveness known as the  ``price of anarchy'' which is the ratio of the worst-case Nash equilibrium to the global social optimum~\cite{robinson:price,nisan:gametheory}. In resource-competitive analysis, each node either obeys the protocol or it does not; in game theory, nodes seek to maximize their respective utility functions. It is possible to address malicious behavior in the context of game theory (see~\cite{abraham:distributed,clement:theory,aiyer:bar,vilaca:asynchrony,li:bar}). The incorporation of  game theoretic concepts may be an interesting direction for {\it future} work on resource-competitive algorithms.\smallskip\smallskip

\noindent{\bf Notions of Inflicting Cost.} The idea of inflicting cost on an opponent arises more explicitly in the domain of cryptography. A  major differentiating aspect of  cryptographic approaches is that the length of a private key is decided prior to encryption. This roughly determines (i) how much the adversary must spend in order to compromise the cryptosystem, and (ii) how much the good nodes must spend to achieve a particular level of security. In contrast, resource-competitive algorithms are {\it adaptive} in the following sense. When $T=0$, there is a small upfront cost quantified by the efficiency function $\tau$. Then, as $T$ grows, the cost function $\rho(T)$ grows commensurately and will dominate the cost of the algorithm when $T$ grows large enough; in this sense, resource-competitive algorithms are adaptive. This is very different from having a predetermined cost.\smallskip\smallskip


\noindent{\bf Additional Related Ideas.} An early example of considering an attacker's resources is the cryptosystem by Merkle~\cite{merkle:secure} where computational puzzles are used to inflict cost on an eavesdropper. However, the hardness of the puzzles must be set {\it a priori} and, therefore this scheme lacks the adaptivity of resource-competitive algorithms described above. Inflicting computational cost has been used to deter spam email~\cite{dwork:pricing}. Another example arises in settings where an attacker controls multiple identities. In such a network, one may issue a cost for joining via computational puzzles~\cite{tegeler:sybil,li:sybilcontrol} or monetary penalties~\cite{castro:secure}. Similar ideas arise in proposals to mitigate applicaton-level DDoS attacks. Typically,  a client must spend bandwidth or computational resources prior to receiving service (see~\cite{walfish:ddos,liu:netfence,parno:portcullis}).  In contrast to these results, resource-competitive analysis goes beyond simply inflicting cost by quantifying a relationship between the cost to the adversary and the cost to the good nodes. 


\subsection{Related Work on Tolerating Jamming Attacks}\label{sec:related_work}\vspace{\nval}

Several works address applied security considerations with respect to jamming~\cite{bayraktaroglu:performance,xu:feasibility,lin:link,aschenbruck:simulative} where the adversary deliberately disrupts the communication medium. Defenses include spread spectrum techniques (frequency or channel hopping), mapping with rerouting, and others (see \cite{wood:denial,xu:jamming,spread:foiling,navda:using}). 

Recent applied work by Ashraf~{\it et al.}~\cite{ashraf:bankrupting} investigates a similar line of reasoning by examining ways in which multi-block payloads (each block in a packet has its own cyclic redundancy code), so-called ``look-alike packets'', and randomized wakeup times for receivers can be used to force the adversary into expending more energy in order to jam transmissions. The authors' approach is interesting and the use of look-alike packets to prevent an adversary from being able to differentiate between different types of nework traffic is similar to our approach to foiling reactive jammers (see Section~\ref{section:reactive_adv}). However, on the whole, their approach is quite different from our own; moreover, analytical results are not provided in~\cite{ashraf:bankrupting}.

There are a number of theoretical results on jamming adversaries. Gilbert and Young~\cite{gilbert:making} give a Monte Carlo $1$-to-$n$ partial broadcast algorithm that is resource competitive. However, the result critically depends on knowing $n$ and still allows the adversary to prevent a small, but constant, fraction of the nodes from receiving the broadcast. Removing the reliance on $n$ is challenging, but a Monte Carlo $1$-to-$n$ broadcast algorithm is given by Gilbert~{et al.}~\cite{gilbert:near} for this case. 

Similarly, the problem of interactive communication on noisy channels has been examined from a resource-competitive perspective. Results in Dani~\cite{dani:errorcorrection} and Dani~{\it et al.}~\cite{daniICJournal17,ICALP15} address an adversary that can flip an unknown number of bits as information is transmitted across a channel. However, this work differs critically in that the goal is to minimize the latency of communication rather than energy expenditure 

Recently, Bender~{\it et al.}~\cite{bender:how} demonstrated that contention resolution -- how to coordinate access to a shared channel by multiple devices -- can be accomplished despite a powerful jamming adversary. Informally, the result guarantees expected constant throughput in a setting where $n$ requests are scheduled adversarially, and each client is shown to require at most an expected polylogarithmic number of channel-access attempts


Outside of resource-competitive results, there is large body of work on mitigating jamming attacks in wireless sensor networks (see~\cite{young:overcoming} for a survey). Gilbert~{\it et al.}~\cite{gilbert:malicious} examines the duration for which communication between two parties can be disrupted in a model with collision detection in a time-slotted network against an adversary who interferes with an unknown number of transmissions. As we do here, the authors assume channel traffic is always detectable at the receiving end (i.e. silence cannot be ``forged''). The authors employ the notion of {\it jamming gain} which is, roughly speaking, the ratio of the duration of the disruption to the adversary's cost for causing such disruption. This is an interesting metric which can gauge the efficiency of the adversary's attack; however, it does not incorporate the cost incurred by the correct parties in the system. 

Pelc and Peleg~\cite{pelc:feasibility} examine an adversary that randomly corrupts messages. Results by Awerbuch~{et al.}~\cite{awerbuch:jamming} and Richa~{et al.}\cite{richa:jamming2,richa:jamming3,richa:jamming4} consider an
adversary whose jamming is bounded within any sufficiently large time window. Under this adversarial model, Ogierman~et al.~\cite{ogierman:competitive} study medium access in the signal-to-interference-plus-noise ratio communication model. In the context of Sybil attacks~\cite{douceur02sybil}, Gilbert and Zheng~\cite{gilbert:sybilcast} devise methods for tolerating a  jamming adversary using coordination amongst the communicating parties, and this is generalized to a completely decentralized scenario by Gilbert, Newport and Zheng in~\cite{gilbert:who}.

In settings where devices have access to more than one channel, a number of theoretical results have been proposed. Dolev~{\it et al.}~\cite{dolev:gossiping} address a variant of the gossiping problem when multiple  channels are jammed. Gilbert~{\it et al.}~\cite{gilbert:interference} derive bounds on the time required for information exchange when a reactive adversary jams multiple channels.
Meier~{\it et al.}~\cite{meier:speed} examine the delay introduced by a jamming adversary for the problem of node discovery, again in a multi-channel setting. Dolev~{\it et al.}~\cite{dolev:secure} address  secure communication using multiple channels with a non-reactive adversary.  Recently, Dolev~{\it et al.}~\cite{dolev:synchronization} consider wireless synchronization in the presence of a jamming adversary. Emek and Wattenhofer~\cite{emek:frequency}  examine the effectiveness of frequency hopping against an adversary who jams a strict subset of the available channels. 

There are also game-theoretic treatments for jamming attacks, and we refer the interested reader to the survey by Manshaei~{\it et al.}~\cite{manshaei:game} for a comprehensive treatment of this area. 

The problem of robust communication has also been considered in the context of reliable broadcast where devices are laid out on a grid model~\cite{king:sleeping,bertier:reliable,vaikuntanathan:reliable,king:sleeping_journal,bhandari,bhandari2,koo,bhandari:probabilistic}.
Listening costs are accounted for by King~{\it et al.}~\cite{king:sleeping,king:sleeping_journal}, but jamming adversaries are not considered. Alistarh~{\it et al.}~\cite{alistarh:securing} assume collision detection and achieve authenticated reliable broadcast with the use of cryptography. With a reactive jamming adversary, Bhandhari~{\it et al.}~\cite{koo2} give a reliable broadcast protocol
when the amount of jamming is bounded and known {\it a priori}; however, correct nodes must expend considerably more energy than the adversary. Progress towards fewer broadcasts is made by Bertier~{\it et al.}~\cite{bertier:icdcs}; however, each node spends significant time in the costly listening state.
 
Finally, a preliminary version of our results  appeared in~\cite{king:conflict}; however, we focus strictly on the single-hop case as this provides the most compelling  results. This current version of our work contains additional exposition regarding the design of algorithm design, revised (and, we believe, clearer) versions of certain proofs, along with arguments and discussion that were previously omitted. Additionally, we consider an overlooked attack that was not addressed in~\cite{king:conflict} (discussed in Section~\ref{sec:variance}). Finally, we note that the lower bound originally provided in~\cite{king:conflict} is not provided here, since more recent work in~\cite{gilbert:near} gives a stronger bound that is asymptotically tight.


\subsection{Our Model}\label{section:attack_model}

We describe our network model and define the communication problem addressed in this work.\vspace{\myval}

\noindent{\bf Las Vegas Property:} Communication of $m$ from Alice to Bob must be guaranteed with probability 1; that is, we require a Las Vegas algorithm. An obvious motivation for this Las Vegas property is a critical application,  such as the dissemination of an important security update, where success is paramount.

The Las Vegas property has additional merit in multi-hop wireless networks where Monte Carlo algorithms may not be adequate. In particular, the failure probability in any single hop of a route may be small in the size of the broadcast neighborhood $n$. However, for large networks of total size $N$, it may be the case that $n\ll N$ and so a union bound over the path length may fail to offer a high-probability guarantee of correctness.\smallskip

\noindent{\bf Channel  Utilization:} Sending or listening on the communication channel by Alice and Bob occurs in discrete units  called {\it slots}. For example, under the common IEEE 802.11g, a slot may correspond to an actual time slot ($9\mu$s) in a time division multiple access (TDMA) type access control protocol.  For simplicity, we assume that the message $m$ fits within a single slot; otherwise, we can send $m$ piecewise.

The cost for sending or listening is $1$ per slot. This is a normalized cost meant to reflect the fact that sending and listening on the channel typically dominates the operational costs of LPN devices. For example, in the WSN setting, the send (at a transmit power of $0$ dBm) and listen costs for the popular Telos motes~\cite{polastre:telos} are 38mW and 35mW, respectively, and these far exceed the other operations costs for an active device. A similar relationship between the send and listen costs holds for the older, well-known MICA family of devices~\cite{micaz}.

When Carol jams a slot, she disrupts the channel such that no communication is possible; jamming costs $1$ per slot. $\Total$ denotes the total amount Carol will spend over the course of the algorithm; this value is {\it unknown} to either Alice or Bob {\it a priori}.  

In practice, disrupting communication within a slot may be less costly than sending a full message. However, so long as the relative costs for sending, receiving, and jamming are correct to within some (possibly large) constant factor, our asymptotic results hold. Further discussion of this issue, and other practical concerns, is provided in later in Section~\ref{section:discussion}.

If two or more messages are sent within a single slot (a \defn{message collision}), or the slot is jammed, then the slot is said to be \defn{noisy}. If a slot is noisy, this is detectable by a party who is {\it listening} at the {\it receiving end} of the channel, but not by the originator of the transmission. For example, a transmission (jammed or otherwise) from Alice to Bob is detectable only by Bob;  likewise, a  transmission (jammed or otherwise) from Bob to Alice is detectable only by Alice. A party does not know {\it why} a slot is noisy; it may be due to a message collision or to jamming, but the party only learns that the channel is in use. Finally,  a slot which is not noisy and does not contain a single message is said to be {\it clear}. \smallskip 

\noindent{\bf The Communicating Parties:}  If Alice is faulty, there is clearly no hope of communicating $m$; therefore,  Alice is assumed to be correct. In other words, Alice can never be spoofed by the adversary Carol. 

Regarding  Bob,  we define two cases. In Case 1, communications from Bob are always trustworthy. That is, Bob is never spoofed by Carol and we treat Carol as a separate third party. This is a benign case, and it corresponds to situations where communications sent by Bob can be trusted and jamming of $m$ is the only obstacle.


In Case 2, Carol may spoof Bob.\footnote{Equivalently, we can consider that Carol {\it is} Bob, or controls Bob completely. Conceptually, the only two parties are then Alice and Carol}.  We emphasize that this can lead Alice to be uncertain about whether to trust Bob. This  uncertainty corresponds to scenarios where a trusted dealer attempts to disseminate content to its neighbors, some of whom may be spoofed or have suffered a Byzantine fault and are used in an attempt attempt to consume resources by requesting numerous retransmissions. \smallskip

\noindent{\bf The Adversary:} Carol has full knowledge of past actions by Alice and Bob. This allows
for {\it adaptive attacks} whereby Carol may alter her behaviour based on observations she has collected over time. Furthermore, under conditions discussed in Section~\ref{section:reactive_adv}, Carol can also be {\it reactive}: in any slot, she may detect activity on the channel and then jam; however, we assume that she cannot detect when a party is listening.


\subsection{Design Goals for our Algorithm}\label{section:fairfavourable}

In designing our resource-competitive algorithm, we have several goals. In terms of correctness, to reiterate, we want to guarantee (with probability $1$) that Bob receives $m$. 

It is also important that both Alice and Bob have termination conditions. For example, when Bob receives $m$, he should soon terminate such that he can either perform other network tasks, or power down to conserve energy. Similarly, when Alice is certain that Bob has received $m$, she should terminate; from an energy-aware perspective, it is no good to have Alice be unsure of Bob's state and, as a consequence, keep resending $m$ in perpetuity.  

We also desire the following three performance properties with regards to cost.  First, we would like $\rho(T)=o(T)$  such that  Alice and Bob both incur asymptotically less expected cost than Carol when $T$ is large. Jamming attacks are effective because a correct device is often forced to incur a higher cost relative to an attacker. However, if the correct parties incur asymptotically less cost than Carol, then Alice and Bob enjoy the advantage, and Carol is faced with the problem of having her energy resources consumed disproportionately by her attempt to prevent communication. 

Second, we want $\tau$ to be small. This property guarantees that the cost of running our resource-competitive algorithm is low when there is little to no attack.    

Third, we want our results to be \defn{fair}:  Alice and Bob should incur the same worst case asymptotic cost  relative to Carol.  


\subsection{Our Results}\label{section:contributions}

Let $\varphi=(1+\sqrt{5})/2$ denote  the golden ratio. We also draw attention to the well-known relationship that $\Phi = \varphi-1 = 1/\varphi = 0.618...$ where $\Phi$ is known as the {\it golden ratio conjugate}.   Our  main  result is stated below. \vspace{\nval}

\begin{theorem}\label{thm:main1}
Let Carol be an adaptive adversary that jams for $\Total$ slots. There exists a resource-competitive algorithm that guarantees Bob receives $m$, both parties terminate, and has the following properties:\vspace{\nval}
\begin{itemize}
\item{}In Case 1,  the expected cost to Alice and Bob is $O(\Total^{0.5} + 1)$. In Case 2, the expected cost to Alice and Bob is $O(\Total^{\Phi} + 1)=O(\Total^{0.62} + 1)$. \vspace{3pt}

\item{}Both parties terminate in an expected $O(\Total^2)$ and $O(\Total^{\varphi})$ slots for Cases 1 and 2, respectively.
\end{itemize}
\end{theorem}

\noindent{}In other words, we have $\rho(T) = O(T^{0.5})$  and $\rho(T) = O(T^{0.62})$ for Case 1 and Case 2, respectively, and $\tau = O(1)$ for both cases. Later, in Section~\ref{section:reactive_adv}, we demonstrate that Theorem~\ref{thm:main1} still holds when Carol is also reactive so long as there is sufficient network traffic in addition to the sending of $m$.

When Alice wants to send $m$ to $n$ receivers, a similar result is achievable. We consider the  analogue to Case 2 where Carol can spoof any subset of the receivers and/or jam the communications of any/all parties. In this setting, we have the following theorem:

\begin{theorem}\label{thm:main2}
Let Carol be an adaptive adversary that jams for $\Total$ slots. There exists a resource-competitive algorithm that guarantees all receivers obtain $m$, and all parties terminate, and has the following properties that hold with high probability in $n$:\footnote{With probability at least $1-1/n$. We will sometimes use \defn{w.h.p.} as an abbreviation.}
\begin{itemize}
\item{}The cost to Alice and Bob is $O(\Total^{\Phi} + \ln^{\varphi}n)$ and $O(\Total^{\Phi} + \ln n)$, respectively. 

\item{}All parties terminate within $O(\Total^\varphi + \ln^{\varphi}n)$ slots.
\end{itemize}
\end{theorem}

\noindent{\bf Roadmap.} The remainder of this paper is organized as follows. In Section~\ref{section:resource-competitive-oblivious}, we proceed through the design process of our algorithm in order to motivate each step. In Section~\ref{section:analysis}, we analyze our algorithm for Cases 1 and 2, and  prove Theorem~\ref{thm:main1}.   In Section~\ref{section:reactive_adv}, we define the conditions under which we can tolerate a reactive adversary. We then demonstrate how our analysis can be amended to address a reactive adversary.  In Section~\ref{section:multiple}, we generalize the Alice and Bob setting to a general broadcast problem where Alice needs to transmit $m$ to multiple receivers. In Section~\ref{section:discussion}, we provide some additional motivation for our network model. Finally, we conclude with some open problems in Section~\ref{section:conc}.




\section{Algorithm Design and Analysis}\label{section:resource-competitive-oblivious}

We incrementally build towards our resource-competitive algorithm. At each step, our design decisions are explained.\vspace{-5pt}

\subsection{A Naive Attempt}

As a first attempt, Alice and Bob can try to outspend the adversary. For example, let transmission of $m$ be attempted over $\ell$ slots. In each even-indexed slot, Alice sends $m$ while Bob listens. In each odd-indexed slot, if Bob has not received $m$, he sends a negative acknowledgement (\texttt{nack}) message; otherwise, Bob terminates. Alice listens in each odd-indexed slot and, if Alice receives a \texttt{nack}, she  continues onto the next even-indexed slot and sends $m$. Similarly, if Alice detects a noisy odd-indexed slot, she interprets this as the situation where Bob sent \texttt{nack} but the slot was jammed; therefore, she continues with the protocol.  However, if Alice detects a clear odd-indexed slot, she knows Bob received $m$ and terminated since the adversary cannot forge a clear slot; in this case, Alice  safely terminates.

While Alice and Bob both finish correctly, note that if the adversary jams $\Total$~consecutive even-indexed slots, then Alice and Bob each send and listen for $2T+2$ slots. Therefore, Alice and Bob {\it each} spend more than twice what the adversary spends; that is, $\rho(T)>2T$ and the adversary rapidly disables each party by depleting the respective energy supplies. This illustrates  why jamming is often an effective attack.

\begin{figure}[t]
\begin{center}
\fbox{\parbox[t]{3.4in}{\noindent{}\LN~for any round\vspace{6pt}

\noindent
\emph{Send Phase:} For each of $\ell$ slots do\vspace{-2pt}
\begin{itemize}
\item Alice sends $m$ with probability $2/\sqrt{\ell}$\vspace{-0pt}
\item Bob listens with probability $2/\sqrt{\ell}$ \vspace{-0pt} 
\end{itemize}

\noindent
If Bob received $m$, then he terminates\\\vspace{-3pt} 

\noindent
\emph{Nack Phase:} For $1$ slot do\vspace{-2pt}
\begin{itemize}
\item Bob sends a \texttt{nack} message\vspace{-0pt}
\item Alice listens 
\end{itemize}

\noindent
If Alice listened to a clear slot, then she terminates

} 

} 
\end{center}
\vspace{-14pt}
\caption{Pseudocode for \LN.}\label{fig:pseudocode_it1}
\vspace{-10pt}\end{figure}

\subsection{Towards a Resource-Competitive Guarantee}\label{sec:less-naive}

An initial attempt at a resource-competitive approach is \LN~in Figure~\ref{fig:pseudocode_it1}. In the Send Phase, Alice and Bob  send and listen each with probability $2/\sqrt{\ell}$. If Bob ever receives $m$, he terminates the protocol. In the Nack Phase, if Bob has not terminated, he sends \texttt{nack} to Alice during the single slot  asking her to enter into a new round. If Alice hears \texttt{nack}, or if the slot is jammed, she proceeds into the next round; otherwise, if the slot is clear, she terminates. Let a \defn{round} refer to the execution of a Send Phase and the corresponding Nack Phase.

Using a birthday-paradox-like argument, there is likely to be a non-jammed slot where both Alice sends $m$ and Bob listens; this is true even if a constant fraction, say $1/2$, of the slots are jammed.  Therefore, communication  likely  succeeds unless the adversary jams more than half of the slots. In a Send Phase where more than half of the slots are jammed -- referred to as a \defn{blocked Send Phase} --- then Bob may not receive $m$. But now $T=\Omega(\ell)$ while both Alice and Bob spend only $O(\sqrt{\ell}) = O(\sqrt{T})$ in expectation. Therefore, to prevent communication in the round, the adversary must incur a cost that is roughly quadratically larger.  This is exactly the flavor of result that we seek.

Are we done? No, there are two shortcomings to \LN~and we describe the first here. Let us focus on the Send Phase and consider what happens if the adversary forces $k$ consecutive rounds with blocked Send Phases. The adversary spends $T=\Omega(k\ell)$ while each party spends $O(k\sqrt{\ell}) = O(T/\sqrt{\ell})$ in expectation. Therefore, for $\Total > \ell$, we no longer have the quadratic advantage obtained above for a single blocked Send Phase. 

This problem arises because each blocked Send Phase imposes equal asymptotic cost on the adversary. In contrast, imagine if the cost of the final blocked Send Phase equaled the total cost  of all previous $O(k)$ blocked Send Phases. Then, the quadratic advantage would still hold by the same reasoning as above; simply apply the same analysis to the final blocked phase. We can achieve this property by increasing the length of the Send Phase as an exponential function of the number of rounds completed so far.

Figure~\ref{fig:less-naive-mod} provides the pseudocode with this modification implemented; we name this intermediate algorithm \MLN. The length of the Send Phase is $2^{ci}$ where the constant $c>0$ is left unfixed (to be as general as possible) and will be determined later.

\begin{figure}[t]
\begin{center}
\fbox{\parbox[t]{3.4in}{\noindent{}\MLN~for round $i\geq 1$\vspace{5pt}

\noindent
\emph{Send Phase:} For each of the $2^{ci}$ slots do\vspace{-0pt}
\begin{itemize}
\item Alice sends $m$ with probability $2/2^{(c-a)i}$\vspace{-0pt}
\item Bob listens with probability $2/2^{(c-b)i}$\vspace{-0pt}
\end{itemize}
If Bob received $m$, then he terminates\\\vspace{-4pt}

\noindent
\emph{Nack Phase:} For $1$ slot do\vspace{-2pt}
\begin{itemize}
\item Bob sends a \texttt{nack} message\vspace{-0pt}
\item Alice listens
\end{itemize}

\noindent
If Alice listened to a clear slot, then she terminates.

} 

} 
\end{center}
\vspace{-14pt}
\caption{Pseudocode for modified~\MLN.}\label{fig:less-naive-mod}
\vspace{-10pt}\end{figure}

The probabilities of Alice sending and Bob listening are also modified such that the expected cost to Alice is $O(2^{ai})$ and to Bob is $O(2^{bi})$, where $a>0$ and $b>0$ are constants that we leave unspecified for now.  However, given that a birthday-paradox-like argument is used to show that Alice and Bob succeed in communication, then we require $a+b = c$ to make our analysis work.\footnote{More generally, we need $a+b\geq c$, but we seek to minimize cost and so we opt for equality.}  

To illustrate what is meant by such an argument,  we \underline{\it sketch} some analysis for $c=2$ and $a=b=1$. For simplicity,  assume a weaker adversary who jams each slot with probability $p_j$. Later, in Section~\ref{section:analysis}, we provide a rigorous analysis that holds against our more powerful adversary.  

First, consider the case of ``light  jamming'', say $p_j \leq 1/2$. Since $a=b=1$, Alice sends and Bob listens with probability $\Theta(1/2^{i})$ per slot. The probability that Bob receives $m$ in slot $j$ is:
\begin{eqnarray*}
&=& Pr(\mbox{slot $j$ not jammed}) \times Pr(\mbox{Alice sends in slot $j$})\\
 && \times Pr(\mbox{Bob listens in slot $j$}) \\
 & = & (1-p_j)  \left( \frac{2}{2^{i}} \right)\left( \frac{2}{2^{i}} \right)
\end{eqnarray*}

The probability that Bob fails to receive $m$ over all $2^{ci} = 2^{2i}$ slots in the Send Phase is at most:
 $$\left(1 - \frac{4(1-p_j) }{2^{2i}} \right)^{2^{2i}} \leq e^{-2}$$ 
\noindent  Therefore, when (roughly) at most half the slots are jammed, there is a constant probability of success, but this argument requires $a+b = c$.

Conversely, consider $p_j >1/2$; for example, perhaps $p_j=1$ and all slots are jammed in the Send Phase. Clearly, Alice will fail in sending $m$ to Bob. However, we do ``succeed'' in making Carol incur significantly more cost. The expected cost to Carol is $T =\Omega(2^{ci})$ while Alice and Bob each spend $O(2^i) = O(\sqrt{T})$ in expectation. Therefore, we recover the quadratic result above for a blocked Send Phase. 

Finally, we highlight the fact that $c>2$ does not improve this advantage. For example, if we set $c=4$, then we require $a+b=4$ and the maximum expected cost to either party is minimized by setting $a=b=2$ and this still yields an expected cost to Alice and Bob of  $O(\sqrt{T})$.

\begin{figure}[t]
\begin{center}
\fbox{\parbox[t]{3.4in}{\noindent{}\AT~for round $i\geq 1$\vspace{5pt}

\noindent
\emph{Send Phase:} For each of the $2^{ci}$ slots do\vspace{-0pt}
\begin{itemize}
\item Alice sends $m$ with probability $2/2^{(c-a)i}$\vspace{-0pt}
\item Bob listens with probability $2/2^{(c-b)i}$\vspace{-0pt}
\end{itemize}
If Bob received $m$, then he terminates\\\vspace{-4pt}

\noindent
\emph{Nack Phase:} For each of the $\ell$ slots do\vspace{-2pt}
\begin{itemize}
\item Bob sends a \texttt{nack} message\vspace{-0pt}
\item Alice listens with probability $4/\ell$\vspace{-0pt}
\end{itemize}

\noindent
If Alice listened to a clear slot, then she terminates.

} 

} 
\end{center}
\vspace{-14pt}
\caption{Pseudocode for \AT.}\label{fig:almosthere}
\vspace{-10pt}\end{figure}


\subsection{Robustness to Attacks in the Nack Phase}\label{sec:almost}

The second shortcoming of \LN~(and also of \MLN) involves the Nack Phase under Case 2. Recall that Bob can be spoofed and his communications can be jammed. Assume that Bob receives $m$ and terminates. Then, in the very next Nack Phase, the adversary may spoof a \texttt{nack} message and force Alice to execute another round. In each subsequent Send Phase, the adversary will do nothing (in order to avoid any cost) and then send another \texttt{nack} in the next Nack Phase.  Therefore, in each round $i$, Alice incurs a cost of roughly $2^{ai}$ in expectation while the adversary has a cost of $1$. Through this attack, the adversary can force Alice to quickly deplete her energy supply. 

Note that even if messages from Bob could be authenticated, the adversary may simply generate noise which will yield the same result.  This is unavoidable since collision detection is used as a reliable negative acknowledgement.

This problem arises because it is ``too cheap'' for Bob to force Alice to proceed into the next round. Intuitively, a remedy is to increase the cost required for Bob to prevent Alice's termination. As with the Send Phase of \LN, this could be accomplished by setting the length of the Nack Phase to be, say, $\ell$ and requiring Bob to send a \texttt{nack} in each slot. Therefore, Bob is making a ``down payment" and this should be large enough to deter the adversary from spoofing Bob in the Nack Phase. 

Should Alice verify the down payment by listening to all $\ell\geq 4$ slots? She is already incurring an expected cost of $2^{ai}$ from the preceding Send Phase, so she could afford to spend the same in the Nack Phase without changing her asymptotic cost. But what happens if the algorithm runs sufficiently long such that $2^{ai}$ exceeds $\ell$? Instead, as we show in Section~\ref{section:analysis}, Alice can sample each slot with probability $4/\ell$ in the Nack Phase; this only contributes additional expected cost of $4$.  The adversary may attempt to spend less than $\ell$ whilst spoofing Bob, but Alice's random strategy is adequate to thwart such behavior. The pseudocode for this modification is given in Figure~\ref{fig:almosthere}.

Finally, we note that fixing the length of the Nack Phase to be $\ell$ in each round runs into a problem similar to that faced in the Send Phase of \LN: multiple consecutive blocked Nack Phases -- where more than half the slots are jammed -- will degrade the resource-competitive ratio. Therefore, we adopt the same solution by increasing the length of the Nack Phase as a function of the number of rounds completed so far. Since Bob spends an expected $2^{bi}$ in the Send Phase, the length of the Nack Phase can also be $2^{bi}$ without increasing his asymptotic cost.

The pseudocode for these modifications is presented as \MAT~in Figure~\ref{fig:pseudocode_2party}. 

\begin{figure}[t]
\begin{center}
\fbox{\parbox[t]{3.4in}{\noindent{}\MAT~for round $i\geq 2/b$\vspace{5pt}

\noindent
\emph{Send Phase:} For each of the $2^{ci}$ slots do\vspace{-0pt}
\begin{itemize}
\item Alice sends $m$ with probability $2/2^{(c-a)i}$\vspace{-0pt}
\item Bob listens with probability $2/2^{(c-b)i}$\vspace{-0pt}
\end{itemize}
If Bob received $m$, then he terminates\\\vspace{-4pt}

\noindent
\emph{Nack Phase:} For each of the $2^{bi}$ slots do\vspace{-2pt}
\begin{itemize}
\item Bob sends a \texttt{nack} message\vspace{-0pt}
\item Alice listens with probability $4/2^{bi}$\vspace{-0pt}
\end{itemize}

\noindent
If Alice listened to a clear slot, then she terminates.

} 

} 
\end{center}
\vspace{-14pt}
\caption{Pseudocode for \MAT.}\label{fig:pseudocode_2party}
\vspace{-10pt}\end{figure}


\subsection{Handling Variance of Cost}\label{sec:variance}

Consider the following strategy by the adversary. All slots in each consecutive Send Phase are jammed until a round $r$ is encountered where Alice sends $2^{cr}$ in the corresponding Send Phase. Note that Alice sends probabilistically, so there is a small, but non-zero, probability of this event occurring in each round, and the adversary's strategy means it will eventually occur with probability $1$. At that point, $T=\Theta(2^{cr})$ and Alice's cost is $\Theta(T)$ which is clearly poor.

Why not modify the algorithm such that Alice randomly selects exactly $2^{ai+1}$ slots in which to send prior to executing the next Send Phase? This prevents any variance in cost and, therefore, prevents the above attack.

To see why this is a bad idea, note an advantage of having Alice decide randomly whether to send on a per-slot basis: {\it an adaptive adversary has no more power than a non-adaptive one}. This is because the action Alice takes in the current slot is independent of what she has done in the previous slots. In other words, knowledge of the players' past actions does not help Carol plan her jamming in the next slot. Therefore, this design choice provides a useful property for tolerating an adaptive adversary.

Selecting a fixed number of slots to send in {\it a priori} deprives us of this property.  For example, if Alice does not send $m$ in slot 1 of the Send Phase, an adaptive adversary learns of this prior to its decision on whether to jam in slot 2. The absence of sending in slot 1 implies that the probability that Alice sends in slot 2 has increased given that she must send in exactly $2^{ai+1}$ slots; that is, these events are dependent.  For the same reason, the probability of a slot in which Alice sends {\it and} Bob listens changes as we proceed through the Send Phase. The adversary may use this information to improve its jamming and, at the very least, analyzing such an algorithm seems difficult. 

Instead, we make the following modification. In the Send Phase, if Alice has sent in $2^{ai+2}$ slots, she remains silent for the remainder of the phase. Similarly, if Bob has listened in $2^{bi+2}$ slots, he performs no more listening for the remainder of the phase. This \defn{cutoff} for either party bounds the cost
to be at most a constant factor more than the expectation. In other words, the cost to Alice and Bob is $O(2^{ai})$ and $O(2^{bi})$, respectively (the additional constant factors are useful in applying a Chernoff bound later on in the analysis). The same modification is made to the Nack Phase: if Alice has listened in $2^{ai+2}$ slots, she performs no more listening for the remainder of the phase. The pseudocode for this final design decision is presented in Figure~\ref{fig:final-alg} as \Alg.

This design maintains independence between events in each slot up until the cutoff point. Also, in each slot up until this cutoff point, the probability of a slot where Alice sends and Bob listens is always the same at $4/2^{(c-a)i+(c-b)i}$. Of course, once the cutoff point is reached, the adversary learns this and does not need to perform any more jamming; however, up to this point, we preserve the robustness to an adaptive adversary.  In analyzing the impact of either party reaching the cutoff in a round, we will pessimistically assume failure and incorporate the cost of this round for each party. However, in Section~\ref{section:analysis}, we will show that this a low-probability event and does not impact the expected cost.

\begin{figure}[t]
\begin{center}
\fbox{\parbox[t]{3.4in}{\noindent{}\Alg~for round $i\geq 2/b$\vspace{5pt}

\noindent
\emph{Send Phase:} For each of the $2^{ci}$ slots do\vspace{-0pt}
\begin{itemize}
\item If Alice has sent in less than $2^{ai+2}$ slots in this phase, she sends $m$ with probability $2/2^{(c-a)i}$\vspace{2pt}
\item If Bob has listened in less than $2^{bi+2}$ slots in this phase, he listens with probability $2/2^{(c-b)i}$\vspace{-0pt}
\end{itemize}
If Bob received $m$, then he terminates\\\vspace{-4pt}

\noindent
\emph{Nack Phase:} For each of the $2^{bi}$ slots do\vspace{-2pt}
\begin{itemize}
\item Bob sends a \texttt{nack} message\vspace{2pt}
\item If Alice has listened in less than $2^{ai+2}$ slots in this phase, she listens with probability $4/2^{bi}$\vspace{2pt}
\end{itemize}

\noindent
If Alice listened to a clear slot, then she terminates.

} 

} 
\end{center}
\vspace{-14pt}
\caption{Pseudocode for \Alg.}\label{fig:final-alg}
\vspace{-10pt}\end{figure}


 \subsection{Description of~\Alg} \label{sec:description}

We summarize a round $i$ of \Alg:\smallskip\smallskip

\noindent{$\bullet$~}{\it Send Phase:} This phase consists of $2^{ci}$ slots. If Alice has sent in less than $2^{ai+2}$ slots, then she will send $m$ in the current slot with probability $\frac{2}{2^{(c-a)i}}$. This yields an expected cost of $2^{(c-a)i+1}$ over the entire phase. If Bob has listened in less than $2^{bi+2}$ slots, he will listen in the current slot with probability $\frac{2}{2^{(c-b)i}}$. This yields an expected total cost of $2^{bi+1}$ over the entire phase.\vspace{5pt}

\noindent{$\bullet$~}{\it Nack Phase:} This phase consists of $2^{bi}$ slots. If Bob has not received $m$, then he sends a \texttt{nack} in all $2^{bi}$ slots.  If Alice has listened in less than $2^{ai+2}$ slots, then she will  listen in the current with probability $4/2^{bi}$ (note that $i\geq{}2/b$ is required) for an expected total cost of $4$ over the phase.\vspace{5pt}

\noindent{\bf Termination Conditions:} As discussed in Section~\ref{section:fairfavourable}, termination conditions are important and we highlight these here.  Bob terminates the protocol upon receiving $m$. Since Alice cannot be spoofed, as discussed in Section~\ref{section:attack_model}, this termination condition suffices. 

Alice terminates if she listens to a clear slot  (neither noisy nor containing a \texttt{nack} message) in the Nack Phase; since jammed slots are detectable by Alice while listening (Section~\ref{section:attack_model}), this condition suffices. That is, Alice continues into the next round if and only if (i) Alice listens to zero slots or (ii) all slots listened to by Alice in the Nack Phase contain a blocked slot or \texttt{nack} message. There are two situations where this occurs:\vspace{2pt}

\noindent{$\bullet$~}{\it Send Failure:} Bob has not received $m$.\vspace{2pt}

\noindent{$\bullet$~}{\it Nack Failure:}   Bob has terminated  and Carol either spoofs \texttt{nack} messages or jams slots in order to trick Alice into thinking a valid \texttt{nack} was sent but jammed.\smallskip\smallskip

\noindent{\bf Nack Failures and Cases 1 \& 2:} Note that an ``acknowledgement'' is indicated by having at least one clear slot in the Nack Phase.  A \defn{Nack Failure} refers to the situation in which, by the end of the Nack Phase, Alice believes Bob is alive and has failed to receive $m$. 

In Case 2, as discussed in Section~\ref{sec:almost}, a Nack Failure may occur via an attack in the Nack Phase after Bob has received $m$ and terminated. Carol may spoof Bob in the Nack Phase by making it appear as if Bob did not receive $m$ and is requesting that both he and Alice proceed into the next round. This behavior keeps Alice active and incurring a cost. Critically, this attack affects Alice only; if Bob has not terminated, a \texttt{nack} message will be issued anyway and the attack accomplishes nothing. Therefore, we need only consider this attack in the situation where Bob has terminated.

In Case 1, no jamming occurs in the Nack Phase and, therefore, no Nack Failure can occur. We note that, in Case 1, the Nack Phase can be shortened to a single slot -- the algorithm~\MLN~will suffice -- where Bob sends his \texttt{nack} message and Alice listens; however,  this does not change our {\it asymptotic} cost for Case 1. Since our current presentation applies to both cases, we proceed with analyzing \Alg.\vspace{\bigval}


\section{Analysis of~\Alg}\label{section:analysis}\vspace{\nval}

Throughout, assume ceilings on the number of slots indicated for use by Alice or Bob if it is not an integer. For any given round, we say it is a \defn{blocked Send Phase} if Carol jams at least half of the slots in the Send Phase; otherwise, it is a  \defn{non-blocked Send Phase}. Similarly, a \defn{blocked Nack Phase} occurs if  Carol jams or spoofs \texttt{nack} messages from Bob in at least half the slots in the Nack Phase; otherwise, it is a \defn{non-blocked Nack Phase}. \smallskip\smallskip

\noindent{\bf Bounds on constants \boldmath{$a,b,c$}:} We make a few important remarks about these constants. Note that if $a\geq{}b$, then the expected cost to Alice is at least as much as the expected cost to Bob. But recall that, in Case 2, the adversary can spoof Bob. Therefore, this allows Carol to cause a Nack Failure with at most the same cost as what Alice incurs in the preceding Send Phase (note that Carol will avoid listening in the Send Phase and incur zero cost there). As discussed in Section~\ref{sec:almost}, we must avoid this since it admits an energy-draining attack against Alice. Therefore, in Case 2, we require $a < b$.

Additionally, our analysis will depend on a birthday-paradox-like argument (in Lemmas~\ref{lemma:nonjamming1} and \ref{lemma:nonjamming2}) to show that $m$ or a \texttt{nack} message is successfully transmitted across the channel. As sketched in Section~\ref{sec:less-naive}, we will need that $a+b = c$ for the analysis to work. 

Finally, as discussed in Section~\ref{sec:less-naive}, we assume that $c \leq 2$.  Consequently, $a < 2$ and $b < 2$.\smallskip\smallskip

\noindent{}Our analysis is composed of two pieces. First, we consider ``light jamming'' where we have no blocked phases. We show that Alice succeeds in communicating $m$ to Bob with probability at least $1-e^{-2}$ in each non-blocked Send Phase. Given that Bob has terminated, a similar result is shown for the probability that Alice (correctly) terminates in a non-blocked Nack Phase.

Second, we consider ``heavy jamming'' where we address  blocked phases. Given the exponentially increasing length of the phases, intuitively the expected cost is dominated by the cost to the parties in the last blocked round.  We can then use this cost to compare against Carol's cost $\Total$.

In both situations, the issue of variance described in Section~\ref{sec:variance} adds a subtlety to our calculations. For light jamming, we only show constant probability of success in non-blocked phases so long as neither Alice nor Bob reach their respective cutoffs.

For heavy jamming, note that the adversary decides when the last blocked round occurs. For example, what if the adversary jams all slots until the (low-probability) event that Alice reaches her cutoff point in the Send Phase?  This highlights the problem of conditioning on the adversary's final blocked round in order to calculate the expected cost. 

Our modifications in Section~\ref{sec:variance} permit a clean analysis. We bound the number of rounds in which either party reaches its respective cutoff. We can show that this occurs with probability exponentially small in the round index $i$, and therefore the impact on the expected cost to each party is negligible. We make use of the following Chernoff bound:

\begin{theorem}\label{thm:Chernoff}(\cite{motwani_raghavan:randomized})
Let $X_1,\dots, X_n$ be binary random variables such that $\Pr(X_i) = p$ and let $X = \sum_{i=1}^{n} X_i$. 
For any $\delta$, where $0 < \delta < 1$\mbox{$:$}
$$\Pr(X>(1+\delta)\,E[X]) \leq e^{-\delta^2 \,E[X] / 3}$$
\end{theorem}

Define a \defn{failed round} to be a round in which either Alice or Bob reaches a cutoff in either the Send or Nack Phase; otherwise, it is \defn{non-failed round}. We now  bound the probability of a failed round as a function of the round index $i$:

\begin{lemma}\label{lemma:deviation}
For any round $i\geq 1$ the probability of a failed round is at most $\exp(-2^{ai}/3) + \exp(-2^{bi}/3)$.
\end{lemma}
\begin{proof}
In the Send Phase of round $i$, let the binary random variable $X_j=1$  if Alice sends $m$ in slot $j$; otherwise, $X_j=0$. Letting $X=\sum_{j=1}^{2^{ci}} X_j$, then:
$$E[X] = \sum_{j=1}^{2^{ci}} E[X_j] = \sum_{j=1}^{2^{ci}} (2/2^{(c-a)i}) = 2^{ai+1}$$

\noindent by linearity of expectation.  A similar calculation for the Nack Phase yields an expected cost of $4$. Therefore, over epoch $i$, Alice's expected cost is $2^{ai+1} + 4$ for $i\geq 1$.  By Theorem~\ref{thm:Chernoff}, using $\delta = 1$ yields that the probability of exceeding this expected cost by a factor of at least $2$ is less than $e^{-2^{ai}/3 }$.

In the Send Phase of round $i$,  let the binary random variable $Y_j=$  if Bob listens in slot $j$; otherwise, $Y_j=0$. Letting $Y=\sum_{j=1}^{2^{ci}} Y_j$, then:
$$E[Y] = \sum_{j=1}^{2^{ci}} E[Y_j] = \sum_{j=1}^{2^{ci}} (2/2^{(c-b)i}) = 2^{bi+1}$$ 

\noindent by linearity of expectation. There is an additional $2^{bi}$ cost associated with the Nack Phase. By Theorem~\ref{thm:Chernoff},  the probability of exceeding the expected cost by a factor of at least 2 is less than $e^{-2^{bi}/3 }$.\qed
\end{proof}

\noindent By Lemma~\ref{lemma:deviation}, the probability of a failed round is very small as a function of the round index $i$. We will show that the contribution of failed rounds does not impact our asymptotic analysis.

We now present the light jamming portion of our analysis.

\begin{lemma}\label{lemma:nonjamming1} Consider a blocked Send Phase in a non-failed round. The probability that Bob does not receive the message from Alice  is at most $e^{-2}$.
\end{lemma}\vspace{\pval}
\begin{proof}
Let $s=2^{ci}$ be the number of slots in the Send Phase. Let $p_A$ be the probability that Alice sends in a particular slot. Let $p_B$ be the probability that Bob listens in a particular slot. Let $X_j=1 $ if the message is not delivered from Alice to  Bob in the $j^{th}$ slot. Then: 
\begin{eqnarray*}
&& Pr[\mbox{$m$ is not delivered in the Send Phase}]\\
&=&Pr[X_1 X_2\cdots{}X_s=1]\\
&=& Pr[X_s=1~|~ X_1~X_2\cdots{}X_{s-1}=1]\cdot{}\prod_{i=1}^{s-1}Pr[X_i=1]
 \end{eqnarray*}
  Let $q_j=1$ if Carol does not jam in slot $j$; otherwise, let $q_j=0$.  The value of $q_j$ can be selected arbitrarily by Carol. Then:
\begin{eqnarray*}  
Pr[X_i=1~|~ X_1 X_2\cdots{}X_{i-1}=1] &=& 1-p_Ap_Bq_j 
\end{eqnarray*}
\noindent Substituting for each conditional probability, we have:
\begin{eqnarray*}  
Pr[X_1X_2\cdots{}X_s=1] &=&(1-p_Ap_Bq_1)\cdots{}(1-p_Ap_Bq_s)\\
&=& \prod_{j=1}^{s}(1-p_Ap_Bq_j)\\
&\leq& e^{-p_Ap_B \sum_{j=1}^s q_j}\\
&\leq& e^{-2} 
\end{eqnarray*}
\noindent where the last inequality follows from:
\begin{eqnarray*}  
p_Ap_B \sum_{j=1}^s q_j  & \geq &  (2/2^{(c-a)i})(2/2^{(c-b)i}) (s/2) \\
& = & (2/2^{(c-a)i})(2/2^{(c-b)i})(2^{ci}/2)\\
&=& 2 
\end{eqnarray*} 
\noindent since $a+b = c$ and there is no blocked send phase which means Carol jams at most $s/2$ slots.\qed
\end{proof}

Note that Lemma~\ref{lemma:nonjamming1} handles adaptive (but not reactive) adversaries. A simple but critical feature of tolerating adaptive adversaries is that the probability that a party is active in one slot is independent from the probability that the party is active in another slot. Therefore, knowing that a party was active for $k$ slots in the past conveys no information about future activity.  For reactive adversaries, we need only modify Lemma~\ref{lemma:nonjamming1} and we address this later (see Section~\ref{section:reactive_adv}).

\begin{lemma}\label{lemma:case1}
Assume there are no blocked Send Phases and no blocked Nack Phases.  The expected cost of each party is $O(1)$.
\end{lemma}\vspace{\pval}
\begin{proof}
We compute the expected cost over both non-failed and failed rounds for $i\geq 1$; note that this can only overestimate the expected cost since \Alg~specifies $i\geq 2/b$ and $b<2$. 

Using Lemmas~\ref{lemma:deviation} and~\ref{lemma:nonjamming1}, the expected cost to Alice is at most:\vspace{-15pt}

\begin{eqnarray*}
\hspace{1cm}&& \sum_{i=1}^{\infty}  e^{-2(i-1)}\,O(2^{ai}) +  \sum_{i=1}^{\infty} \exp{\left(-2^{\Theta(i-1)}\right)} O(2^{ai}) \\  
& = & O(1)\sum_{i=1}^{\infty} \left(\frac{2^a}{e^2}\right)^i \\ 
&=& O(1)  
\end{eqnarray*}
 
\noindent where the last line follows from the fact that $a < 2$ and the sum of a geometric series. Similarly, the expected cost to Bob is at most:
\begin{eqnarray*}
\hspace{1cm}&& \sum_{i=1}^{\infty}  e^{-2(i-1)} O(2^{bi})  +  \sum_{i=1}^{\infty} \exp{\left(-2^{\Theta(i-1)}\right)} O(2^{bi}) \\
&\leq& O(1) \sum_{i=1}^{\infty} \left(\frac{2^b}{e^2}\right)^i \\
&=&O(1)
 \end{eqnarray*} 
 
\noindent where the last line follows from the fact that $b < 2$ and the sum of a geometric series.~$\qed$
\end{proof}

\begin{lemma}\label{lemma:nonjamming2} Assume that Bob has received $m$ by non-failed round $i$, and that round $i$
has a non-blocked Nack Phase. Then, the probability of a Nack Failure is at most $e^{-2}$.
\end{lemma}\vspace{\pval}
\begin{proof}
Let $s=2^{bi}$ denote the number of slots in the Nack Phase and let $p=4/2^{bi}$ denote the probability that Alice listens in a slot.
For slot $j$, define $X_j$ such that $X_{j}=1$ if Alice does not terminate.  Then, $Pr[$ Alice retransmits $m$ in round $i+1] = $ $Pr[X_1X_2\cdots{}X_s=1]$. Let $q_j=1$ if Carol does not jam  in slot $j$; otherwise, let $q_j=0$. The $q_j$ values are determined arbitrarily by Carol. Since Alice terminates when she listens and hears a clear slot, then $Pr[X_j=1]=(1-pq_j)$. Therefore, $Pr[X_1X_2\cdots{}X_s=1] \leq e^{-p\sum_{j=1}^{s}q_j} \leq e^{-2}$.\qed
\end{proof}

\noindent We now consider the heavy jamming portion of our analysis.

\begin{lemma}\label{lemma:costs}
Assume there is at least one blocked Send Phase.  In Case 1, the expected cost to Alice and Bob is $O(\Total^{a/c})$ and $O(\Total^{b/c})$, respectively. In Case 2, the expected cost to Alice and Bob is $O(\Total^{a/c}+\Total^{a/b})$ and  $O(\Total^{b/c})$, respectively.
\end{lemma}
\begin{proof}
\noindent{}Let $i\geq{}2/b$ be the last blocked Send Phase. Let $j\geq{}i$ be the last blocked Nack Phase; if no such blocked Nack Phase exists, then assume $j=0$. 

Carol may choose $i$ and $j$ depending on the history of the execution. With this notation in place, we can bound the cost to Carol, Alice, and Bob. \smallskip\smallskip

\noindent{\bf Carol:} In Case 1,  only blocked Send Phases occur and so $\Total= \Omega(2^{ci})$. In Case 2,  since there can be a blocked Nack Phase, the total cost to Carol is $\Total = \Omega(2^{ci} + 2^{bj})$. \smallskip\smallskip

\noindent{\bf Alice:} We begin by calculating the expected cost to Alice prior to successfully transmitting $m$.   Using Lemmas~\ref{lemma:deviation}~and~\ref{lemma:nonjamming1}, the expected cost to Alice prior to $m$ being delivered is at most:
\begin{eqnarray*}
&& O(2^{ai}) + \sum_{k=1}^{\infty}  e^{-2(k-1)}  \,O(2^{a(i+k)}) \\
& & \hspace{30pt}+ \sum_{k=1}^{\infty} \exp{\left(-2^{\Theta(k-1)}\right)} O(2^{a(i+k)})\\
& = & O(2^{ai}) +  O(2^{ai})\sum_{k=1}^{\infty} \left(\frac{2^{a}}{e^2}\right)^k +O(2^{ai})   \\
&=&  O(2^{ai}) 
\end{eqnarray*}
\noindent  where the last line follows $a<2$ and the sum of a geometric series.

Next, using Lemmas~\ref{lemma:deviation} and~\ref{lemma:nonjamming2}, we calculate the expected cost to Alice after delivery of $m$; this addresses blocked Nack Phases possible only in Case 2. By assumption, the last blocked Nack Phase occurs in round $j$ and therefore Alice's expected cost is at most:
\begin{eqnarray*}
&& O(2^{aj}) + \sum_{k=1}^{\infty} e^{-2(k-1)} O(2^{a(j+k)})\\
& & \hspace{30pt}+ \sum_{k=1}^{\infty} \exp{\left(-2^{\Theta(k-1)}\right)} O(2^{a(j+k)})\\
& = & O(2^{aj}) +  O(2^{aj})\sum_{k=1}^{\infty} \left(\frac{2^{a}}{e^2}\right)^k + O(2^{aj})   \\
&=&O(2^{aj}) 
\end{eqnarray*}

\noindent where the last line follows $a<2$ and the sum of a geometric series.

Therefore, in Case 2, the total expected cost to Alice is $O(2^{ai} + 2^{aj}) $.  Since $\Total = \Omega(2^{ci}  + 2^{bj} )$, this cost as a function of $\Total$ is $O(\Total^{a/c}+\Total^{a/b})$. For Case 1, there are no blocked Nack Phases and so  Alice's cost is simply $O(\Total^{a/c})$.\smallskip\smallskip

\noindent{\bf Bob:} We analyze the expected cost to Bob up until $m$ is received. Note that we assume Carol does not have any cost for blocked Nack Phases up until this point since, as discussed in Section~\ref{sec:description}, there is no benefit to causing a Nack Failure via jamming. 

Using Lemma~\ref{lemma:nonjamming1},  Bob's expected cost prior to receiving $m$ is at most:
\begin{eqnarray*}
&& O(2^{bi}) + \sum_{k=1}^{\infty}  e^{-2(k-1)}O(2^{b(i+k)}) \\
& & \hspace{30pt}+ \sum_{k=1}^{\infty} \exp{\left(-2^{\Theta(k-1)}\right)} O(2^{b(i+k)})\\
& \leq & O(2^{bi}) +  O(2^{bi})\sum_{k=1}^{\infty} \left(\frac{2^{b}}{e^2}\right)^k   + O(2^{bi})\\
&=& O(2^{bi}) 
\end{eqnarray*}
\noindent where the last line follows $b<2$ and the sum of a geometric series. Therefore, the expected cost for Bob as a function of $\Total = \Omega(2^{ci})$ is $O(\Total^{b/c})$. $\qed$
\end{proof}\vspace{\nval}

\noindent{}We now prove Theorem~\ref{thm:main1} stated in Section~\ref{section:contributions}:\vspace{\myval}

\noindent{\it Proof of Theorem~\ref{thm:main1}:} For both Case 1 and 2, when there are no blocked Send Phases and no blocked Nack Phases, the expected cost to each party is $O(1)$ by Lemma~\ref{lemma:case1}. Therefore, we have $\tau = O(1)$.

For both Case 1 and Case 2, if there is at least one blocked Send Phase, Lemma~\ref{lemma:costs} gives the expected cost for Alice and Bob as $O(\Total^{a/c})$ and $O(\Total^{b/c})$, respectively. Recall from Section~\ref{section:fairfavourable} that we desire the worst-case relative costs for Alice and Bob be equal. Therefore, setting $a/c = b/c$ implies that $a=b=1$ and $c=2$. This yields $\rho(\Total) = O(\Total^{0.5})$ and therefore the expected cost to each party is $O(\Total^{0.5} + 1)$.

In Case 2,  if there is at least one blocked Send Phase or Nack Phase, the expected cost to Alice and Bob is $O(\Total^{a/c}+\Total^{a/b})$ and $O(\Total^{b/c})$, respectively,  by Lemma~\ref{lemma:costs}.  The exponents of interest are now $a/c$, $a/b$, and $b/c$. Notice that $a/c$ and $a/b$ both correspond to Alice's expected cost, but that $a/c < a/b$. Therefore, we need consider only $a/b$ and $b/c$ and we wish $a/b = b/c$ to obtain fairness. Also recall that we insisted on $a+b = c$.  

We now have three parameters and two constraints. However, this is sufficient for us to derive useful values for $a,b,$ and $c$. By dividing the second constraint by the first, we have:
\begin{eqnarray*}
&& \frac{(a+b)b}{a} = \frac{c^2}{b} \\
&\Rightarrow& \left(1 + \frac{b}{a} \right) = \left(\frac{c}{b}\right)^2\\
&\Rightarrow& \left(1 + \frac{c}{b} \right) = \left(\frac{c}{b}\right)^2 \mbox{~~~since~} \frac{a}{b} = \frac{b}{c}\\
&\Rightarrow& \left(\frac{c}{b}\right)^2 - \left(\frac{c}{b} \right)  - 1 = 0
\end{eqnarray*}
\noindent Solving for the roots of this quadratic polynomial yields $c/b = (1+\sqrt{5})/2 = \varphi$ which is the golden ratio. Therefore, so long as the constants $a,b,c$ are set such that $c/b = b/a = \varphi$, \Alg~has $\rho(T) = O(\Total^{\Phi}) = O(\Total^{\varphi-1}) = O(\Total^{0.62})$ where $\Phi = 1/\varphi = \varphi-1$. 

How many slots in expectation occur prior to termination by both Alice and Bob? We focus on Alice since she terminates after Bob. Assume pessimistically that a blocked Send or Nack Phase prevents her from terminating. Carol can stretch her budget the farthest by blocking the Nack Phase only as this incurs a cost of $2^{bi}/2$ rather than a cost of $2^{ci}/2$ for blocking the Send Phase. 

For how many rounds can Carol do this? Solving for $s$ in $\sum_{i=2/b}^{s} 2^{bi}/2\geq \Total$ implies that $s \leq \lg(\Total)/b + O(1)$. To this, we add the number of non-blocked phases Alice endures before terminating; let $X$ denote the random variable for this number of rounds.  Then, $E[X] \leq (1-e^{-2}) + 2\,e^{-2}(1-e^{-2}) + 3\,e^{-4}(1-e^{-2}) + ... = \sum_{i=1}^{\infty} i\,e^{2(1-i)}(1-e^{-2}) = (1-e^{-2})e^2 \sum_{i=1}^{\infty} i(e^{-2})^i$  by Lemmas~\ref{lemma:nonjamming1} and \ref{lemma:nonjamming2}.  Therefore, $E[X]\leq 1/(1-e^{-2}) = O(1)$ which means Carol can delay the termination by an additional $O(1)$ rounds. 

For each of the $\lg(\Total)/b + O(1)$ rounds, Alice experiences a delay of $2^{ci} + 2^{bi} = O(2^{ci})$ slots. This means that Alice terminates within an expected  $O(2^{(c/b)\lg(\Total) + O(c)}) = O(\Total^{c/b})$ slots. This translates to $O(\Total^2)$ and $O(\Total^{\varphi})$ for Cases 1 and 2, respectively.  $\qed$
\vspace{8pt}

\noindent Figure~\ref{fig:final-alg-set} provides an updated \Alg~with a setting of the constants $a=\varphi-1,b=1,$ and $c=\varphi$ such that we obtain the guarantees of Theorem~\ref{thm:main1} for Case 2.

\begin{figure}[t]
\begin{center}
\fbox{\parbox[t]{3.4in}{\noindent{}\Alg~for round $i\geq 2$\vspace{5pt}

\noindent
\emph{Send Phase:} For each of the $2^{\varphi i}$ slots do\vspace{-0pt}
\begin{itemize}
\item If Alice has sent in less than $2^{(\varphi-1)i+2}$ slots in this phase, she sends $m$ with probability $2/2^{i}$\vspace{2pt}
\item If Bob has listened in less than $2^{i+2}$ slots in this phase, he listens with probability $2/2^{(\varphi-1)i}$\vspace{-0pt}
\end{itemize}
If Bob received $m$, then he terminates\\ \vspace{-4pt}

\noindent
\emph{Nack Phase:} For each of the $2^{i}$ slots do\vspace{-2pt}
\begin{itemize}
\item Bob sends a \texttt{nack} message\vspace{2pt}
\item If Alice has listened in less than $2^{\varphi i+2}$ slots in this phase, she listens with probability $4/2^{i}$\vspace{2pt}
\end{itemize}

\noindent
If Alice listened to a clear slot, then she terminates.

} 

} 
\end{center}
\vspace{-14pt}
\caption{Pseudocode for \Alg~with the constants $a=\varphi-1$, $b=1$, $c=\varphi$.}\label{fig:final-alg-set}
\vspace{-10pt}\end{figure}


\section{Tolerating a Reactive Adversary}\label{section:reactive_adv} 

Consider a reactive adversary Carol who can detect channel activity for free~\textemdash~that is, distinguish between when the channel is clear and when it is busy~\textemdash~and then jam. This ability to reactively jam is possible in WSNs~\cite{wilhelm:reactive}.

Carol can now detect that $m$ is being sent in the Send Phase and jam it without fail. To address this powerful adversary, we consider the case where non-critical data, $m'$, is sent over the channel by other participants in addition to Alice and Bob.  Carol can detect the traffic; however, she cannot discern whether it is $m$ or $m'$ without listening to a portion of the communication. For example, in practice, she may need to listen to a part of  the packet header in order to make such a determination.

In a slot where channel activity is detected, if Carol listens for a portion of the message, we assume she incurs a cost. Therefore, the cost to Carol is proportional to (1) the number of messages to which she listens, and (2) the number of slots in which she jams. Importantly, in the presence of $m'$,  Carol's ability to detect is unhelpful since $m'$ provides ``camouflage'' for $m$.  

As an example, assume that {\it all}  slots in the Send Phase are used either by Alice to send $m$ (as per our algorithm) or another party, Dave, whose transmissions of $m'$ do not interest Carol. Assume that Dave's transmissions are in the majority. 

In this situation, detecting channel activity does not help Carol decide on whether to jam; all slots are used. Regardless of how she decides to act, Carol can do no better than picking slots independent of whether she detects channel activity. In other words, channel activity is no longer useful in informing Carol's decisions about whether to jam.

But assuming {\it all} slots are active is problematic.  How is this guaranteed or coordinated? We relax our assumption that all slots are used. Instead, we assume that other network traffic occurs such that Carol will always detect traffic on at least a {\it small} constant fraction of slots in the Send Phase. This traffic is assumed to be random and independent of Alice sending $m$.

Upon detecting traffic, Carol may listen to a portion of the message to discover if it is $m$ or $m'$ and then decide on whether to jam, all at a cost of $1$. However, this is roughly as expensive~\textemdash~it is the same order of magnitude~\textemdash~as simply jamming outright (also a cost of $1$ in our model).  

In practice, such situations can arise where communication occurs between many participants, or via several distributed applications, or internally between respective co-located networks. An attacker may wish to selectively target only one component of the system in order to conserve energy and reduce the chances of its malicious activity being detected.

As with any two messages,  if $m$ and $m'$ are sent over the channel in the same slot, the two messages collide and Bob receives neither.  Define a slot as \defn{active} if either $m$ or $m'$ is sent in that slot. For this result only, redefine a \defn{blocked Send Phase} as  one where Carol listens to or jams more than a $1/3$-fraction of the {\it active} slots in the Send Phase; otherwise, it is \defn{non-blocked}.  We  provide a result analogous to Lemma~\ref{lemma:nonjamming1}.\vspace{\nval}

\begin{lemma}\label{lemma:nonjamming1_reactive} Let Carol be an adaptive and reactive adversary. Then, in a non-blocked Send Phase, the probability that Bob does not receive $m$ from Alice is at most $e^{-2}$.\vspace{\nval}
\end{lemma}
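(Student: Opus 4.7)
The plan is to mirror the proof of Lemma~\ref{lemma:nonjamming1}, with the slot-usability indicator $q_j$ redefined to absorb the new forms of loss in the reactive setting. Let $s=2^{ci}$ and set $q_j=1$ if slot $j$ is both unblocked by Carol and free of any $m'$ transmission by Dave, and $q_j=0$ otherwise; these are precisely the slots in which an Alice-send and Bob-listen coincidence would deliver $m$. With this definition, the chain of inequalities from the original lemma carries over verbatim:
\[
\Pr[X_1X_2\cdots X_s=1]\leq \prod_{j=1}^{s}(1-p_Ap_Bq_j)\leq e^{-p_Ap_B\sum_{j}q_j},
\]
so it suffices to show $\sum_jq_j\geq s/2$ to obtain the desired $e^{-2}$ bound.

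The core step is then controlling the number of slots lost to blocking or collision with $m'$. Because the round is non-send-blocking, Carol acts (listens or blocks) on fewer than $|\mathcal{A}|/3 \leq s/3$ slots, and so blocks at most $s/3$ of them. The subtle point, emphasized in the surrounding discussion, is that a reactive Carol cannot preferentially aim her blocks at Alice's slots: distinguishing $m$ from $m'$ requires listening, which is roughly as costly as blocking and is already charged to her budget. Hence her blocking set can be modeled as oblivious to whether Alice or Dave produced the activity in a given slot. Provided $m'$ traffic occupies at most a suitably small constant fraction of the $s$ slots, the count $\sum_jq_j$ of usable slots is at least $s/2$, giving $p_Ap_B\sum_jq_j\geq (4/2^{ci})\cdot(s/2)=2$, which matches the constant used in Lemma~\ref{lemma:nonjamming1}.

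The main obstacle is making the obliviousness claim precise. My plan is to argue that, for every reactive strategy of Carol that respects the non-send-blocking budget, there exists a non-reactive strategy that blocks exactly the same number of slots with at most the same probability of preventing delivery of $m$ against \Alg; this reduces the reactive analysis to the setting of Lemma~\ref{lemma:nonjamming1}. The constants in the $1/3$ threshold and in the assumed density of background $m'$ traffic must be tuned jointly so that the arithmetic just above goes through, but once that calibration is fixed the exponential bound of $e^{-2}$ is immediate.
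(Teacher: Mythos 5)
There is a genuine gap, and it sits exactly where you flag ``the main obstacle.'' Your bookkeeping treats $q_j$ as a slot-usability indicator that multiplies $p_Ap_B$ independently, but for a reactive adversary the decision to block slot $j$ is made \emph{after} observing whether slot $j$ is active, i.e.\ after Alice's own coin flip for that slot. Without further argument, ``$q_j=1$'' can be strongly anti-correlated with ``Alice sends in slot $j$'' (Carol blocks precisely the active slots), so the factorization $\Pr[X_1\cdots X_s=1]\leq\prod_j(1-p_Ap_Bq_j)$ does not carry over verbatim; what you actually need is the probability that a slot is unblocked \emph{conditioned on Alice sending in it}. The reduction you defer --- that every reactive strategy within the budget is dominated by a source-oblivious one with the same block count --- is the entire content of the lemma, and it is only true because of the quantitative camouflage assumption; stating it as a plan leaves the lemma unproved. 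The paper's proof makes this precise by a different route: it assumes the background senders choose slots independently at random so that each slot carries $m'$ with probability $2/3$, applies Chernoff bounds to get $y>(2x/3)(1-\epsilon)$ active slots and $a=(1-\delta)xp_A$ Alice-sends with high probability, argues that Carol's at most $x/3$ listen-or-block choices among active slots are independent of the source so each of Alice's $a$ sends is delivered with probability at least $(1-x/(3y))\cdot(1/3)\cdot p_B\geq p_B/12$, obtains the failure bound $(1-p_B/12)^a$ plus the Chernoff error terms, and absorbs those additive terms by boosting Bob's listening probability $p_B$ by a constant factor.

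A second, concrete problem is your target inequality $\sum_jq_j\geq s/2$ with $q_j$ requiring the slot to be free of $m'$. Under the paper's traffic model the $m'$-free slots number only about $s/3<s/2$, so the inequality fails outright. You patch this by assuming $m'$ occupies a ``suitably small constant fraction'' of slots, but the density of $m'$ is load-bearing: the redefined send-blocking threshold is a $1/3$-fraction of the \emph{active} slots, and the downstream cost accounting in Lemma~\ref{lemma:case3} needs a send-blocking round to cost Carol $\Omega(2^{ci})$, which requires the active slots to be a constant fraction of all $s=2^{ci}$ slots. Even setting that aside, folding blocking and $m'$-collisions into a single deterministic $q_j$ conflates two losses that the paper must (and does) treat separately: the collision loss is a per-slot probability ($1/3$ of slots are $m'$-free) while the blocking loss is a conditional survival probability given that Alice sent, and only the latter uses the non-send-blocking budget.
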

\begin{proof}
Let $x=2^{\varphi i}$ be the   number of   slots  in the Send  Phase. Consider the set of slots used
by all participants (such as Dave) other than Alice.  We assume these participants pick their slots at random
to send, so that for any slot the probability is 2/3 that the slot is chosen by at least one of
them. Let $p_A$ and $p_B$ be the probabilities of sending and listening by Alice and Bob in the Send Phase, respectively.

Since we assume  these messages $m'$ are sent independently at random, then  Chernoff bounds
imply that, with high probability (i.e., $1-1/x^{c'}$ for constants $c', \epsilon$ and sufficiently large $x$) the number of slots  $y$ during which $m'$ is sent
is  greater than  $(2x/3)(1 - \epsilon)$  where
$x$ is the total number of slots in a phase. 

In the same way, assume the number of slots in which Alice sends is at least $a'=(1-\delta)xp_A = (1-\delta)2^{(\varphi-1)i+1}$
with probability $1-1/x^{c''}$ for a constant $\delta,c''$ and sufficiently large $x$. The number of active slots  is clearly at least $y$.

By definition of a non-blocked Send Phase, Carol listens to or blocks at most 
$x/3$ active slots in the Send Phase. As Carol has no information about the source of a
message sent in an active slot until  she  listens to it, her choice is independent of the source
of the message.

Given a slot in which Alice sends,  there is  at least a $1-(x/3)/y $ chance it will
 not be listened to or jammed by Carol. The probability that this slot will not be used by another participant is
 $1/3$ and the probability that Bob will listen to the slot is $p_B$. Hence the probability of a successful  transmission
from Alice to Bob in a slot for which Alice sends is at least: 
\begin{eqnarray*} 
\left(1-\frac{x}{3y}\right) \left(\frac{1}{3}\right) p_B &=& \left(1-\frac{1}{2(1-\epsilon)} \right)  \left(\frac{1}{3}\right) p_B \\
&\geq& \left(\frac{1}{3} - \frac{(1+\delta)}{6}\right)p_B\\
& \geq & \left(\frac{1}{12}\right)p_B
\end{eqnarray*} 
\noindent for sufficiently large $x$ (that reduces the size of $\delta$) when $y>(1-\epsilon)(2x/3)$.  

The probability that all messages that Alice sends fail to be delivered is at most $(1- p_B/12)^{a'} + 2/x^{c''}$ where the
last term is the probability of the bad event that $y$ or $a'$ is small and $c''>0$ is a constant.  Redefine $p_B=24/((1-\delta)2^{(\varphi-1)i})$ where the value $24$ is set to off-set the additive $2/x^{c''}$ term. Note that this constant factor increase in the listening probability does not change our asymptotic results and our analysis  in Section~\ref{section:analysis}  proceeds  almost identically. Therefore, we then have  $(1- p_B/12)^{a'} + 2/x^{c''} \leq e^{-2}$.\qed
\end{proof}

\noindent{}The \Alg~can be modified so that the initial value of $i$ is large enough to render the error arising from the use of Chernoff bounds sufficiently small; we omit these details. 

The required level of channel traffic detected by Carol is flexible. Different values can be accommodated if the parties' probabilities for sending and listening are modified appropriately in \Alg; our results hold asymptotically.   We emphasize that further revision of the arguments presented in Section~\ref{section:analysis} are minor.  Lemmas~\ref{lemma:case1},\ref{lemma:nonjamming2}, and~\ref{lemma:costs} do not require modification.  Carol cannot decide to block only when Alice is listening since detecting when a node is listening is assumed to be impossible under our model. Alternately, Carol cannot silence a \texttt{nack} through (reactive) jamming since this is still interpreted as a retransmission request.  Using Lemma~\ref{lemma:nonjamming1_reactive}, Theorem~\ref{thm:main1} follows as before with the same asymptotic guarantees.

Finally, we note that the conclusion of our argument aligns with claims put forth in empirical results on reactive jamming; that is, such behavior does not necessarily result in a more energy-efficient attack because the adversary must still be listening to the channel for broadcasts prior to committing itself to their disruption~\cite{xu:feasibility}.


\section{Resource-Competitive Local Broadcast}\label{section:multiple} 
 
We present  a resource-competitive algorithm, \AlgN, that allows Alice to send a message $m$ to $n$ neighboring \defn{receivers} within her transmission range; that is, our algorithm allows Alice to perform  a local broadcast. We assume that the quantity $\ln n$ is known to Alice; however, our results likely still hold if a constant-factor approximation is known.

Throughout this section, we only consider the equivalent of Case 2 in this section; that is, where Carol may spoof any receiver(s) and can jam their transmissions. However, equivalent results trivially hold for the simpler Case 1. \AlgN~preserves much of the same guarantees as \Alg, and is fair to all parties up to a polylogarithmic factor. 

Our pseudocode is given in Figure~\ref{fig:pseudocode_nparty}. The probabilities for sending and listening are modified from $\Alg$. Note that \texttt{nack} messages from multiple receivers can (and likely will) collide in the Nack Phase. This is fine since such a collision is due to either jamming or multiple  receivers requesting a retransmission; in either case, Alice will correctly resend. 

If Carol jams, then we assume she chooses the subset of the receivers (which can be the entire set of receivers) whose members do not receive $m$. In the literature, such an adversary is referred to as $n$-uniform~\cite{richa:jamming2}. The cost for jamming is still $1$ per slot regardless of how many receivers are jammed. We maintain the definition in Section~\ref{sec:less-naive} of a blocked Send Phase (or Nack Phase): more than half the slots are jammed.

In this $n$-party setting, the arguments regarding the Send Phase are in need of minor revision, and we repeat these for completeness. We use nearly-identical definition of a failed round: any round in which either Alice or any receiver exceeds her/his respective expected cost by a factor of $2$ or more in either the Send or Nack phase; otherwise, it is \defn{non-failed round}. In the multi-receiver case, we can bound the probability of a failed round as a function of $n$. 

\begin{lemma}\label{lemma:deviation-n}
For any round $i\geq \lg(4\ln n)$ the probability of a failed round is at most $1/n$.
\end{lemma}
\begin{proof}
In the Send Phase of round $i$, let the binary random variable $X_j=1$  if Alice sends $m$ in slot $j$; otherwise, $X_j=0$. Letting $X=\sum_{j=1}^{2^{\varphi i}} X_j$, then:
$$E[X] = \sum_{j=1}^{2^{\varphi  i}} E[X_j] = \sum_{j=1}^{2^{\varphi i}} (2\ln n/2^{i}) = 2^{(\varphi-1)i+1}\ln n$$

\noindent by linearity of expectation.  A similar, calculation for the Nack Phase yields an expected cost of $4\ln n$. Therefore, over epoch $i$, Alice's expected cost is $2^{(\varphi-1)i} + 4\ln n$.  By Theorem~\ref{thm:Chernoff}, the probability that she reaches her cutoff is at most:
$$\exp{(-2^{(\varphi -1 )i+1}\ln n/3)} \leq \exp{(-\Theta(\ln^{\varphi} n))}$$ 

\noindent since $i\geq  \lg(4\ln n)$. In the Send Phase of round $i$,  let the binary random variable $Y_j=$  if receiver $R$ listens in slot $j$; otherwise, $Y_j=0$. Letting $Y=\sum_{j=1}^{2^{\varphi  i}} Y_j$, then:
$$E[Y] = \sum_{j=1}^{2^{\varphi i}} E[Y_j] = \sum_{j=1}^{2^{\varphi i}} (2/2^{(\varphi-1)i}) = 2^{i}$$ 

\noindent by linearity of expectation. There is an additional $2^{i}$ cost associated with the Nack Phase. By Theorem~\ref{thm:Chernoff}, the probability that $R$  reaches its cutoff is at most:
$$\exp{(-2^{i+1}/3)} \leq \exp{(-8/3)\ln n} \leq n^{-8/3}$$
\noindent Taking a union bound implies that the probability that any receiver reaches its respective cutoff is at most $n^{-5/3}$. Finally, adding Alice's error from the above calculations yields the result.\qed
\end{proof}

\noindent Next, we prove the analogue to Lemmas~\ref{lemma:nonjamming1} and~\ref{lemma:case1} in the multi-receiver case.

\begin{lemma}\label{lemma:nonjamming1-n} Consider a blocked Send Phase in a non-failed round. The probability that any receiver does not receive the message from Alice  is at most $1/n$.
\end{lemma}\vspace{\pval}
\begin{proof}
Let $s=2^{\varphi i}$ be the number of slots in the Send Phase. Let $p_A$ be the probability that Alice sends in a particular slot. Let $p_R$ be the probability that a fixed receiver listens in a particular slot. Let $X_j=1 $ if the message is not delivered from Alice to the receiver in the $j^{th}$ slot. Then: 
\begin{eqnarray*}
&& Pr[\mbox{$m$ is not delivered in the Send Phase}]\\
&=&Pr[X_1 X_2\cdots{}X_s=1]\\
&=& Pr[X_s=1~|~ X_1~X_2\cdots{}X_{s-1}=1]\cdot{}\prod_{i=1}^{s-1}Pr[X_i=1]
 \end{eqnarray*}
  Let $q_j=1$ if Carol does not jam in slot $j$; otherwise, let $q_j=0$.  The value of $q_j$ can be selected arbitrarily by Carol. Then:
\begin{eqnarray*}  
Pr[X_i=1~|~ X_1 X_2\cdots{}X_{i-1}=1] &=& 1-p_Ap_Rp_Rq_j 
\end{eqnarray*}
\noindent Substituting for each conditional probability, we have:
\begin{eqnarray*}  
Pr[X_1X_2\cdots{}X_s=1] &=&(1-p_Ap_Rq_1)\cdots{}(1-p_Ap_Rq_s)\\
&=& \prod_{j=1}^{s}(1-p_Ap_Rq_j)\\
&\leq & e^{-p_Ap_R \sum_{j=1}^s q_j}\\
&\leq & e^{-(4\ln n/2^{\varphi i})(s/2)} \\
& = & e^{-2\ln n} \\
& = & n^{-2}
\end{eqnarray*}
\noindent Taking a union bound over all $n$ receivers yields the result. \qed
\end{proof}

\begin{lemma}\label{lemma:case1-n}
Assume there are no blocked Send Phases and no blocked Nack Phases. With high probability, the cost to Alice and each receiver is $O(\ln^{\varphi} n)$ and $O(\ln n)$, respectively.
\end{lemma}
\begin{proof}
For ease of exposition, let $\sigma  = \lg(4\ln n)$. We compute the expected cost for both non-failed and failed rounds for $i\geq \sigma$. Using Lemmas~\ref{lemma:deviation-n} and~\ref{lemma:nonjamming1-n}, the expected cost to Alice is at most:
\begin{eqnarray*}
\hspace{1cm}&& \sum_{i=\sigma}^{\infty}  (1/n)^{(i-\sigma)}\,O(2^{(\varphi-1)i}\ln n) = O(\ln^{\varphi} n) 
\end{eqnarray*}
\noindent Similarly, the expected cost to each receiver $R$ is at most:
\begin{eqnarray*}
\hspace{1cm}&& \sum_{i=\sigma}^{\infty}  (1/n)^{(i-\sigma)} O(2^{i})  = O(\ln n) 
 \end{eqnarray*} 
 
\noindent which completes the proof. Using a Chernoff bound (Theorem~\ref{lemma:costs}) provides the guarantee with high probability. \qed
\end{proof}

\begin{lemma}\label{lemma:nonjamming2-n} Assume that all receivers have received $m$ by non-failed round $i$ and that round $i$
is a non-blocked Nack Phase. Then the probability of a Nack Failure is less than $1/n^{2}$.
\end{lemma}\vspace{\pval}
\begin{proof}
Let $s=2^{i}$ be the number of slots in the Nack Phase and let $p=4\ln n/2^{i}$ be the probability that Alice listens in a slot.
For slot $j$, define $X_j$ such that $X_{j}=1$ if Alice does not terminate.  Then, $Pr[$ Alice retransmits $m$ in round $i+1] = $ $Pr[X_1X_2\cdots{}X_s=1]$. Let $q_j=1$ if Carol does not jam  in slot $j$; otherwise, let $q_j=0$. The $q_j$ values are determined arbitrarily by Carol. Since Alice terminates when she listens and hears a clear slot, then $Pr[X_j=1]=(1-pq_j)$. Therefore: 
\begin{eqnarray*}
Pr[X_1X_2\cdots{}X_s=1] & \leq & e^{-p\sum_{j=1}^{s}q_j} \\
& \leq  & e^{-2\ln n} \\
& \leq  & n^{-2}
\end{eqnarray*}
\noindent which completes the proof. \qed
\end{proof}

\noindent The analogue to Lemma~\ref{lemma:costs} is mostly unchanged, although we first establish the following result:

\begin{lemma}\label{lemma:whp-n}
With high probability, the cost of a round $i$ is $O(2^{(\varphi-1)i}\ln n)$ and $O(2^{i})$ for Alice and each receiver, respectively.
\end{lemma}
\begin{proof}
For the Send Phase, define a binary random variable $X_j=1$ if Alice sends in slot $j$; otherwise, $X_j=0$. Setting $X=\sum_{j=1}^{2^{\varphi i}} X_j$, we have:
\begin{eqnarray*}
E[X] & = & \sum_{j=1}^{2^{\varphi i}}  E[X_j ] =  \sum_{j=1}^{2^{\varphi i}}  2\ln n/2^{i} = O(2^{(\varphi - 1)i}\ln n).
\end{eqnarray*}
\noindent By Theorem~\ref{thm:Chernoff}, the cost is within a small constant factor with high probability. Similarly, in the Nack Phase, let the indicator random variable $Y_j=1$ if Alice listens in slot $j$.  Setting  $Y=\sum_{j=1}^{2^{\varphi i}} Y_j$, we have:
\begin{eqnarray*}
E[Y] & = & \sum_{j=1}^{2^{i}}  E[Y_j ] =  \sum_{j=1}^{2^{i}}  4\ln n/2^{i} = 4\ln n.
\end{eqnarray*}
\noindent By Theorem~\ref{thm:Chernoff}, we can bound this to within a small constant factor with high probability. Therefore, with high probability, Alice's cost is $O(2^{(\varphi - 1)i}\ln n) + O(\ln n) = O(2^{(\varphi - 1)i}\ln n)$.

Consider a receiver $R$. Let $X'_j=1$ if $R$ listens in slot $j$ of the Send Phase; otherwise, $X'_j=0$. Setting $X'=\sum_{j=1}^{2^{\varphi i}} X'_j$, we have:
\begin{eqnarray*}
E[X'] & = & \sum_{j=1}^{2^{\varphi i}}  E[X'_j ] =  \sum_{j=1}^{2^{\varphi i}}  2/2^{(\varphi-1)i} = O(2^{i}) = O(\ln n)
\end{eqnarray*}
\noindent where the last equality follows from $i\geq \lg(4\ln n)$. By Theorem~\ref{thm:Chernoff}, we can bound this to within a small constant factor with high probability. Taking a union bound over all $n$ receivers establishes the result. \qed
\end{proof}

\begin{lemma}\label{lemma:costs-n}
Assume there is at least one blocked Send Phase. With high probability, the cost to Alice and each receiver is $O(\Total^{\varphi-1})$.\end{lemma}
\begin{proof}
\noindent{}Let $i\geq{}\lg(4\ln n)$ be the last blocked Send Phase. Let $j\geq{}i$ be the last blocked Nack Phase; if no such blocked Nack Phase exists, then assume $j=0$.  Carol may choose $i$ and $j$ depending on the history of the execution. With this notation in place, we can bound the cost to Carol, Alice, and the receivers.  \smallskip 

\noindent{\bf Carol:} The total cost to Carol is $\Total = \Omega(2^{\varphi i} + 2^{j})$. \smallskip


\noindent{\bf Alice:}  Using Lemmas~\ref{lemma:deviation-n},~\ref{lemma:nonjamming1-n}, and~\ref{lemma:whp-n}, with high probability the cost to Alice prior to $m$ being delivered is at most:
\begin{eqnarray*}
&& O(2^{(\varphi-1)i}\ln n) + \sum_{k=1}^{\infty}  (2/n)^{k-1}  \,O(2^{(\varphi-1)(i+k)}\ln n) \\
&=&  O(2^{(\varphi-1) i}\ln n). 
\end{eqnarray*}

Next, using Lemmas~\ref{lemma:deviation-n},~\ref{lemma:nonjamming2-n}, and~\ref{lemma:whp-n}, we calculate the  cost to Alice after delivery of $m$; this addresses blocked Nack Phases. By assumption, the last blocked Nack Phase occurs in round $j$ and, therefore, with high probability Alice's expected cost is at most:
\begin{eqnarray*}
&& O(2^{(\varphi-1)j}\ln n) + \sum_{k=1}^{\infty} (2/n)^{k-1} O(2^{(\varphi-1)(j+k)}\ln n)\\
&=&O(2^{(\varphi-1)j}\ln n). 
\end{eqnarray*}
\noindent Therefore, with high probability, the cost to Alice is  $O(\Total^{\varphi-1})$. \smallskip 

\noindent{\bf Receivers:} We analyze the cost to each receiver up until $m$ is received. Note that we assume Carol does not have any cost for blocked Nack Phases up until this point since, as discussed in Section~\ref{sec:description}, there is no benefit to causing a Nack Failure via jamming. 

Using  Lemmas~\ref{lemma:deviation-n},~\ref{lemma:nonjamming1-n}, and~\ref{lemma:whp-n}, prior to receiving $m$, with high probability each receiver has cost at most:
\begin{eqnarray*}
&& O(2^{i}) + \sum_{k=1}^{\infty}  (2/n)^{k-1}O(2^{i+k}) \\
&=& O(2^{i} + \ln n) \\
&=& O(2^{i}).
\end{eqnarray*}
\noindent  Thus, with high probability, the cost to each receiver is $O(\Total^{1/\varphi}) = O(\Total^{\varphi-1})$. $\qed$
\end{proof}

\begin{lemma}\label{lemma:latency-n}
With high probability, Alice and all correct receivers terminate \AlgN~in $O(\Total^{\varphi} + \ln^{\varphi}n)$ slots. 
\end{lemma} 
\begin{proof}
We analyze the time it takes for Alice to terminate since she does so only after all receivers terminate.  As with the analysis of \Alg, we pessimistically assume a blocked phase prevents Alice from terminating in the current round. Furthermore, Carol will only perform her jamming to block Nack Phases since this maximizes the rounds that she keeps Alice alive. 

Solving $\sum_{i=\lg(4\ln n)}^s 2^i/2 \geq \Total$ yields that the number of rounds that Carol can block is at most $s \leq \lg(\Total) + O(1)$. By the end of the very next round, all receivers are guaranteed with high probability to receive the message by Lemma~\ref{lemma:nonjamming1-n},  and Alice is guaranteed with high probability to terminate by Lemma~\ref{lemma:nonjamming2-n}. Therefore, with high probability, Alice terminates after $O(2^{\varphi (\lg \Total + O(1))} + 2^{\varphi \lg(4\ln n)}) = O(\Total^{\varphi} + \ln^{\varphi}n)$.
\end{proof}

\noindent The following theorem on the resource-competitive properties of~\AlgN~follows directly from Lemmas~\ref{lemma:case1-n},~\ref{lemma:costs-n}, and~\ref{lemma:latency-n}

\begin{theorem}
Let Carol be an adaptive adversary that jams for $\Total$ slots. \AlgN~guarantees $m$ is received by all receivers and all parties terminate. With high probability, Alice has a cost of $O(\Total^{\varphi-1} + \ln^{\varphi}n)$, each receiver has a cost of $O(\Total^{\varphi-1} + \ln n)$, and all parties terminate within $O(\Total^{\varphi} + \ln^{\varphi}n)$ slots.
\end{theorem}

\begin{figure}[t]
\begin{center}
\fbox{\parbox[t]{3.4in}{
\noindent{}\mbox{ \AlgN~for $i \geq{}  \lg(4\ln{n}) $ }\vspace{3pt}

\noindent
{\it Send Phase:} For each of the  $2^{\varphi{}i}$ slots do\vspace{-0pt}
\begin{itemize}
\item If Alice has sent in less than $2^{\varphi i + 2}\ln n$ slots in this phase, she sends $m$ with probability $\frac{2\ln{n}}{2^{i}}$.\vspace{-0pt}
\item If this receiver has listened in less than $2^{i+2}$ slots in this phase, it listens with probability $\frac{2}{2^{(\varphi-1)i}}$.\vspace{-0pt}
\end{itemize}

\noindent{}Any receiver that obtains $m$ terminates.\vspace{2pt}

\noindent
{\it Nack Phase:} For each of the $2^{i}$ slots do\vspace{-0pt}
\begin{itemize}
\item Each receiver that has not terminated sends \texttt{nack}.\vspace{-0pt}
\item If Alice has listened in less than  $2^{\varphi i + 2}\ln n$ slots in this phase, she listens with probability $\frac{4\ln{n}}{2^{i}}$.\vspace{-0pt}
\end{itemize}

\noindent{}If Alice listened to a clear slot, then she terminates.
} 

} 
\end{center}
\vspace{-15pt}\caption{Pseudocode for \AlgN.}\label{fig:pseudocode_nparty}\vspace{-10pt}
\end{figure}


\section{Discussion}\label{section:discussion}\vspace{\nval}

In this section, we provide some follow-up discussion regarding the practical aspects of wireless LPNs and how our abstract network model is motivated. \smallskip

\noindent{\bf Sending and Listening Costs in Practice:} Wireless network cards typically offer states such as {\it  sleep, receive (or listen)} and {\it transmit (or send)}. While the sleep state requires negligible power, the cost of the send  and listen states are roughly equivalent and dominate the operating cost of a device.  For example, the send and listen costs for the popular Telos motes are 38mW and 35mW, respectively and the sleep state cost is $15 \mu{}$W~\cite{polastre:telos}; therefore, the cost of the send/listen state  is more than a factor  of $2000$ greater and the sleep state cost is negligible. In the context of our work, when a party is not active, we can assume it is in the energy-efficient sleep state.

Disruption may not require jamming an entire slot, and so jamming a slot can be less costly than sending/listening. However, we can assume a small $m$ (or $m$ is broken into small packets) such that jamming and sending costs are within a constant factor of each other. \smallskip 

\noindent{\bf Slots:}  A time-slotted network corresponds to a time division multiple access (TDMA)-like medium access control (MAC) protocol; that is, a time-slotted network. Two examples are the popular IEEE 802.11 family of specifications, and the well-known LEACH~\cite{heinzelman:energy}.  For simplicity, a {\it global} broadcast schedule is assumed; however, this is likely avoidable if nodes maintain multiple schedules as with S-MAC~\cite{ye:energymac}. Even then, global scheduling has been demonstrated by experimental work in~\cite{li:energy} and secure synchronization has been shown~\cite{ganeriwal:secure}.

Clear channel assessment (CCA), which subsumes carrier sensing, is a common feature on devices for detecting activity on the channel \cite{ramachandran:clear}; this is considered practical under IEEE 802.11~\cite{deng:new}. Collisions are usually only detectable by the receiver~\cite{wood:denial}. When a collision occurs,   a correct node discards any received data. We assume that the absence of channel activity cannot be forged by the adversary; this aligns with the empirical work by Niculescu~\cite{niculescu:interference} who shows that channel interference increases linearly with the combined rate of the sources. Finally, we also note that several theoretical models feature collision detection (see~\cite{alistarh:securing,richa:jamming2,gilbert:malicious,awerbuch:jamming,koo2}).\\

\noindent{\bf On Reactive Adversaries:} CCA is performed via the radio chip using the {\it received signal strength indicator} (RSSI)~\cite{srinivasan:rssi}. If the RSSI value is below a clear channel threshold, then the channel is assumed to be clear~\cite{wildpackets}. Such detection consumes on the order of $10^{-6}$\hspace{2pt}W which is three orders of magnitude smaller than the send/listen costs; therefore, a reactive Carol can detect activity (but not message content) at essentially zero cost. However, listening to even a small portion of a message costs on the order of milliwatts and our argument from Section~\ref{section:reactive_adv} now applies. \smallskip

\noindent{\bf Cryptographic Authentication:} We assume that the message $m$ from Alice can be authenticated. Therefore, Carol cannot spoof Alice. Several results show how light-weight cryptographic authentication can be implemented in sensor networks~\cite{liu:multi,watro:tinypk,law:survey,karlof:tinysec,walters:security}. 

What about having a shared secret for Alice and Bob that helps coordinate their communication? This is outside the scope of this work, but we remark that the adversary may capture a limited number of parties (such as Bob).  These parties are said to suffer a Byzantine fault and are controlled by the adversary~\cite{wood:denial,walters:security}. Given this attack, we emphasize that, while we assume authentication on Alice's side, attempts to share a secret
send/listen schedule between Alice and Bob allows Carol to manipulate parties in ways that appear difficult to overcome.  


\section{Conclusion and Future Work}\label{section:conc}

In this paper, we provided resource-competitive algorithms for mitigating  jamming attacks in wireless LPNs.  We see that the golden ratio arises naturally from our analysis, and its appearance in this adversarial setting is interesting. Notably, a later result in~\cite{gilbert:near} demonstrates that our result for Alice and Bob is asymptotically tight in that $\Omega(\Total^{\varphi-1})$ expected cost is necessary.

Future work includes pursuing resource-competitive algorithms to mitigate jamming in multihop networks. Additionally,  the application of these results to problems of consensus and leader election may prove fruitful. Finally, investigations into more sophisticated communication models, such as the signal-to-interference-plus-noise (SINR) model, may be of interest.


\end{document}
